\newtheorem{theorem}{Theorem}[section]
 \newtheorem{lemma}[theorem]{Lemma}
\newtheorem{definition}[theorem]{Definition}
\newtheorem{proposition}[theorem]{Proposition}
\newtheorem{remark}[theorem]{Remark}
\newtheorem{problem}[theorem]{Problem}
\newtheorem*{question}{Question}
\newtheorem{notation}[theorem]{Notation}
\newtheorem{exampleth}[theorem]{Example}
\newenvironment{example}{\begin{exampleth}}{\hfill $\diamond$\\ \end{exampleth}}
\newcommand \ZZ {\mathbb{Z}}
\newcommand \pp {\mathcal{P}}
\newcommand \RR {\mathbb{R}}
\newcommand \codim{\text{codim}}
\newcommand \bb {\mathcal{B}}
\newcommand \Sn {\mathbb{S}}
\newcommand \cT {\mathscr{T}}
\newcommand \cD {\mathscr{D}}
\newcommand \gfan {\texttt{{G}fan}}
\newcommand \polymake {\texttt{{P}olymake}}
\newcommand \PP {\mathcal{P}}
\newcommand \cc {\mathscr{C}}
\newcommand \fastme {\texttt{{F}astme}}
\newcommand \ww {\omega}
\newcommand \vvv {\nu}
\newcommand \cB {\mathcal{B}}
\newcommand{\restr}[1]{\left|_{#1}\right.}
\newcommand \tofrom {\leftrightarrow}
\newcommand{\eat}[1]{}
\newcommand{\RF}{\ensuremath{\delta_{RF}}}
\newcommand{\splits}{\Sigma}
\newcommand{\shift}{h}
\newcommand{\ssa}{\mathsf{a}}
\newcommand{\ssb}{\mathsf{b}}
\newcommand{\ssc}{\mathsf{c}}
\newcommand{\ssd}{\mathsf{d}}
\begin{document}

\begin{abstract}
It is well known among phylogeneticists that adding an extra taxon (e.g. species) to a data set can alter the structure of the optimal phylogenetic tree in surprising ways.
However, little is known about this ``rogue taxon'' effect.
In this paper we characterize the behavior of balanced minimum evolution (BME) phylogenetics on data sets of this type using tools from polyhedral geometry.
First we show that for any distance matrix there exist distances to a ``rogue taxon'' such that the BME-optimal tree for the data set with the new taxon does not contain any nontrivial splits (bipartitions) of the optimal tree for the original data.
Second, we prove a theorem which restricts the topology of BME-optimal trees for data sets of this type, thus showing that a rogue taxon cannot have an arbitrary effect on the optimal tree.
Third, we construct polyhedral cones computationally which give complete answers for BME rogue taxon behavior when our original data fits a tree on four, five, and six taxa.
We use these cones to derive sufficient conditions for rogue taxon behavior for four taxa, and to understand the frequency of the rogue taxon effect via simulation.
\end{abstract}

\title[Polyhedral Geometry of Phylogenetic Rogue Taxa]{Polyhedral Geometry of\\ Phylogenetic Rogue Taxa}
\author{Mar\'ia Ang\'elica Cueto}
\email{macueto@math.berkeley.edu}
\address{Department of Mathematics,
  University of California, Berkeley, CA 94720, USA.}

\author{Frederick A. Matsen}
\email{matsen@fhcrc.org} 
\address{
Program in Computational Biology,
Fred Hutchinson Cancer Research Center,
1100 Fairview Ave. N. M1-B514,
P.O. Box 19024,
Seattle, WA 98109-1024
 } 
\thanks{The first author was
  supported by a UC Berkeley Chancellor's Fellowship. The second
  author was supported by the Miller Institute for Basic Research at UC Berkeley.}

\keywords{minimum evolution, distance-based phylogenetic inference,
  linear programming, polytope, normal fan}
\subjclass[2010]{92B99 (92D15), 52B12.}

\maketitle

\section{Introduction}
\label{sec:introduction}

Ideally, phylogenetic data sets would have the property that the optimal tree for a subset $X$ of taxa $Y$ would be the same as the tree obtained by restricting the optimal tree on $Y$ to the set $X$.
However, practicing phylogeneticists are well aware that this is not the case; the extensive literature on ``taxon sampling'' reviewed below is evidence to the contrary.
One can also find references to ``rogue taxa'' which, although not clearly defined or rigorously investigated, are taxa who do not fit into a tree and whose inclusion may disrupt the inference of evolutionary relationships of the other taxa.
For example, \citet{sullivanSwoffordGuineaPigsRodents97} state ``\dots the hedgehog therefore appears to represent a `rogue' taxon that cannot be placed reliably with these data and that possibly confounds attempts to estimate the relationships among the remaining taxa.''
The ``rogue'' descriptor is also used by \citet{baurainEaLackOfResolution07} to describe taxa with a ``strong nonphylogenetic signal''; these authors describe the importance of finding and eliminating these taxa from phylogenetic studies. 

Surprisingly, we were unable to find any mathematical or simulation-based analysis of the action of rogue taxa in phylogenetic trees. 
The closest studied subject is ``taxon sampling.'' 
This area of research is focused on the following question: if we are interested in the phylogenetic tree on a set of taxa $Y$, do we do better or worse by adding more taxa into the tree?
If better, is the improvement more significant than would be gained by increasing the length of the sequences (by redirecting resources)? 

The origins of the taxon sampling debate can be traced to the pioneering paper of \citet{felsenstein78} that demonstrated mathematically the existence of ``long branch attraction,'' where two pendant branches are artifactually placed close together by parsimony algorithms.
This led to the question of if parsimony long branch problems could be dispensed with by adding new taxa to the dataset to break up the long branches; \citet{hendyPennyFramework89} have answered affirmatively under certain conditions.
The investigation was continued by \citet{kimInconsistencyParsimony96}, who showed that the situation is subtle and that the new taxa must appear in specific regions of the tree in order to counter the long branch attraction problem.

These mathematical investigations of parsimony were followed by a
flood of simulation-based papers investigating maximum likelihood,
parsimony, as well as distance methods for phylogenetics.
\citet{hillisInferringComplex96}, \citet{graybealBetter98}, and
\citet{poeSensitivity98} indicated that a larger number of taxa
improved estimation, whereas the high-profile publication of
\citet{rosenbergKumarIncompleteTaxSampOK01} claimed the opposite.
  The
Hillis group responded 
\citep{zwicklHillisIncreasedTaxSampGood02,pollockEaTaxSampGood02,hillisEaIsSparseTaxSampProblem03}
which led to \citet{rosenbergKumarGenomics03}  somewhat moderating their
position.
  The debate on taxon sampling has continued to the
present day, with additional simulations
\citep{poeLongBranchSubdivision03, debrySamplingParsimony05,
  hedtkeEaResolutionByTaxSamp06}, review articles
\citep{heathEaTaxSampAccuracy08}, and studies to understand the impact
of taxon sampling on the inference of macroevolutionary processes \citep{heathEaTaxSampMacroevol08}.
The simulation literature in this area is considered important enough
to even have a paper \citep{rannalaEaTaxSampAccuracy98} about
methodology for taxon-sampling simulations.

There are two inherent difficulties with simulations of this type.
First, the collection of possible parameter values for simulation is vast, and any simulation study must make choices about which parameters to use.
This first problem alone may be the source of the disagreement found in the taxon selection literature.
Second, the simulations are done by simulating data with a single model on a tree, then reconstructing.
This does not address the problem of what happens when considering unusual data sets, such as those obtained by major model misspecifications.

A mathematical approach can address these difficulties, although with
certain caveats.
Theorems can indicate that a phenomenon will always happen given certain criteria, and the construction of the complete spaces of examples or counter-examples gives very precise information about these questions.
  By exploring the complete space of
data sets of a certain type, one is not limited to data sets which are
within a certain class of models.
  The trade-off for the strength of
these conclusions is that often the setting must be simplified to make the
problem mathematically tractable.

In order to address taxon selection and the rogue taxon effect problem mathematically, we have chosen to use distance-based phylogenetics, specifically the Balanced Minimum Evolution (BME, described below) criterion.
Because the optimality criterion is expressed in terms of the minimization of an inner product, we are able to harness the power of polyhedral geometry to answer the questions of interest with a high degree of precision. 
Although BME-based algorithms are not among the most popular in
phylogenetics, implementations do exist which show good performance under simulation \citep{desperGascuelBMEFastAccurate02}. 
The BME criterion is consistent \citep{desperGascuelBME04}, as is FastBME which minimizes BME through tree rearrangements \citep{bordewichEaFastBMEConsistent09}. 
Another motivation for studying BME is the close relationship between BME and the very popular Neighbor-Joining (NJ) algorithm \citep{saitouNeiNJ87}.
Specifically, NJ has been shown to be a heuristic BME minimizer
\citep{desperGascuelME05}; the relationship between the two algorithms
has been investigated by \citet{BMEPolytope}. 

After describing a bit of terminology, we will discuss the main results of the paper.
Note that by \emph{dissimilarity map} we simply mean a mapping $D$ from unordered pairs of taxa to non-negative numbers such that $D(x,x) = 0$ for all $x$.
These are sometimes called ``distance matrices'' in the phylogenetics literature but we use dissimilarity map to emphasize that they need not satisfy the triangle inequality.

\begin{definition}
  Let $t$ be a phylogenetic tree equipped with branch lengths
  $\mathbf{b}$.  The tree metric associated with $t$ and $\mathbf{b}$ is the
  dissimilarity map obtained as follows: the distance between taxa $i$
  and $j$ of $t$ is given by the total length (i.e.\ sum of branch lengths)
  of the path from $i$ to $j$ in $t$ with respect to $\mathbf{b}$.
\end{definition}

Next we define some core objects of study for this paper.

\begin{definition}
  \label{def:DandLifting} 
  Let $D$ be a dissimilarity map on $n$ taxa. 
  A ``lifting'' $\tilde{D}$ of $D$ is a dissimilarity map on $n+1$ taxa obtained from $D$ by adding distances from the first $n$ taxa to an $(n+1)$st taxon.
\end{definition}

\begin{definition}
  Let $D$ be a dissimilarity map on $n$ taxa, and let $\tilde{D}$ be a lifting of $D$. 
  The BME tree for $D$ will be called the ``lower tree,'' while the BME tree for $\tilde{D}$ will be called the ``upper tree.''
  The ``restricted upper tree'' will be the tree induced on the original $n$ taxa by restricting the upper tree to this set.
\end{definition}

Our primary goal is to understand topological differences between the upper and lower trees for various original dissimilarity maps $D$ and various liftings $\tilde{D}$.

\subsection{Overview of the paper}
\label{sec:overview-paper}
The first section describes the effect of adding a new taxon when the original dissimilarity map $D$ is arbitrary.
Theorem~\ref{thm:funny} shows that for any $D$ there exists a lifting such that the intersection of the split sets for the restricted upper tree and the lower tree consists of the trivial pendant splits.
  In other words, we show that the restricted
upper tree and the lower tree can be maximally distant in terms of the
Robinson-Foulds metric \citep{robinsonFouldsComparison81}.
  However,
the upper tree cannot deviate from the lower in an arbitrary way:
Theorem~\ref{thm:NeverSx} shows that certain combinations of lower and
upper trees are not possible.
We also note that the trees of Theorem~\ref{thm:funny} need not be maximally distant in terms of the quartet distance (Remark~\ref{rmk:quartet}).

The second section addresses the case when the original dissimilarity map $D$ is a tree metric for
some tree $t$; in this setting there is no question of what the
optimal tree for the lower taxa ``should'' be.
  That is, if the upper
tree does not contain the lower tree, the additional taxon is
definitely a disrupting ``rogue'' taxon.
  When $D$ is a tree metric,
there exists a simplified formulation of the BME computations.
  This
``reduced'' formulation has a linear rather than a quadratic number of
variables, and allows polyhedral computation directly over the
parameters of interest.
  We study the associated ``reduced polytope''
and several of its combinatorial and geometric properties, including
its dimension.
  Using this ``reduced'' formulation we are able to give
sufficient conditions (Propositions~\ref{prop:rogueFunnyBL}
and \ref{prop:rogueThreeTimes}) for the rogue taxon effect when the
lower tree has four taxa, as well as a perspective on the frequency of
the rogue effect through simulations for up to six lower taxa.

The computations in this paper were done with a combination of
\gfan~\citep{gfan}, \polymake~\citep{polymake}, and custom
\texttt{ocaml} code using GSL, the GNU scientific library\nocite{ocaml}\nocite{gsl}. 
For the interested reader, source code is
available at 
\begin{center}
  \texttt{\url{http://github.com/matsen/roguebme}}.
\end{center}


\section{Polyhedral geometry and BME phylogenetics}
\label{sec:polyhedral-geometry}

In this section, we introduce the mathematical problem we wish to investigate and walk through the necessary background in polyhedral geometry.
We start by defining the Balanced Minimum Evolution (BME) criterion for phylogenetic inference. 

For the purposes of this paper, all trees will be unrooted phylogenetic trees.
We will use parenthetical ``Newick format'' to describe trees, such that $((a,b),(c,d),e)$ indicates a five taxon tree with the pairs $a,b$ and $c,d$ being sister taxa \citep{felsensteinBook}.
Sometimes we will write these unrooted trees in a rooted manner, as we feel that $((a,b),(c,d))$ is clearer than $(a,b,(c,d))$.
The degree-two vertex of the rooted representation should be suppressed.
\emph{Trivalent} trees are trees such that all internal nodes have degree three.
\begin{definition}
  Given a dissimilarity map $D$ in $\RR^{\binom{n}{2}}$, the
  ``{Balanced Minimum Evolution}'' (BME) length of a phylogenetic tree
  $t$ with respect to a dissimilarity map $D$ is the quantity
  \begin{equation}
    \lambda(t,D) := \sum_{1\leq i<j\leq n} \ww^t_{ij} D_{ij},\label{eq:BMECriterion}
  \end{equation}
  where $\ww_{ij}^t=\prod_{v \in p_{ij}^t} (\deg(v)-1)^{-1}$, and $p_{ij}^t$ denotes the internal vertices in $t$ on the path between leaves $i$ and $j$.
\end{definition}
\begin{remark}
  In the case of a trivalent tree $t$, the weight $\ww^t_{ij}$
  equals $2^{-|p_{ij}^t|}$.
\end{remark}

A \emph{BME tree} for an $n \times n$ non-negative matrix $D$ will be a tree
$t$ minimizing $\lambda(t,D)$ over all $n$-taxon trees.
  The BME algorithm is \emph{consistent} on trivalent trees: if $D$ is tree
metric with trivalent tree topology $t$, then the BME tree of $D$ is
$t$ \citep{desperGascuelBME04}.

Note that there is a volume-zero set of dissimilarity maps with
multiple optimal BME trees, and therefore it is not quite right to
speak of ``the'' BME tree.
  All of our statements are true by
replacing ``the BME tree'' with ``a BME tree'', however, we prefer
stating the former.
 More precisely, given a dissimilarity map, we have
two cases: either the set of a possible BME trees of $D$ consist of a
single (trivalent) tree, or the set has size at least two and it is
closed under degenerations.
 That is, if a trivalent tree $t$ contracts to a BME tree for $D$, then $t$ is also a BME tree for $D$; this claim will be clear from the polyhedral perspective described below.

There are several equivalent formulations of the BME length
\citep{BMEPolytope}, although we prefer \eqref{eq:BMECriterion} because of its polyhedral interpretation.


Global BME minimization is known to be hard \citep{bmeHard}.
 The widely used Neighbor-Joining algorithm approaches the BME problem from a greedy perspective \citep{NJGreedyBME}.
The \fastme\ algorithm starts with a heuristically obtained tree and then refines it using Nearest Neighbor Interchange (NNI) to attempt to find the BME minimal tree \citep{desperGascuelBME04}.  
A better understanding of the BME polytope (defined below) could lead to better such algorithms \citep{FastMEP}, analogous to how understanding the traveling salesman polytope provides insight into the traveling salesman problem \citep{MR811476}.

We now introduce the BME polytope, first investigated by \citet{BMEPolytope}.
A \emph{polytope} in $\RR^m$ is the convex hull of a finite number of points in $\RR^m$.
Fix a positive integer $n$. 
The \emph{BME polytope} in $\RR^{\binom{n}{2}}$ is the convex hull of the points $(\ww^t_{ij})_{i,j}$, where $t$ varies among all possible tree topologies on $n$ taxa.
Using this polyhedral interpretation, the problem of finding the
BME-optimal tree $t$ on $n$ taxa corresponds to picking a vertex $\ww^t$ of the BME polytope minimizing the Euclidean dot product of the vertex with a given dissimilarity map (considered as a vector in $\RR^{n \choose 2}$).
The BME tree is the tree associated to this vertex.

We can characterize this optimization process by constructing the corresponding \emph{inner normal fan}.
The inner normal fan of a polytope $\pp\subset \RR^N$ is given as a
finite collection of cones (i.e.\ a set closed under multiplication by positive scalars) as follows.
Each cone in the inner normal fan of $\pp$ corresponds to a face $\mathcal{F}$ of the polytope $\pp$ and is defined as
\begin{equation}
\cc_{\mathcal{F}}:=\{w\in \RR^N\colon
\langle w ,v\rangle =\min\{\langle w , u \rangle : u\in \pp\},\;  \forall \, v\in F\},
\end{equation}
i.e.\ those vectors such that the minimum inner product is achieved at all points of the face $\mathcal{F}$.

By construction, each cone is polyhedral: it is the solution set of a
system of linear inequalities.  As such, it can be expressed as the
positive span (i.e.\ using nonnegative scalars) of finitely many vectors,
which we call \emph{extremal rays}.  In addition, the inner normal fan
of $\pp$ is a \emph{polyhedral fan} because the family
$\{\cc_{\mathcal{F}}\colon \mathcal{F}\subset \pp \text{ face}\}$ is
closed under intersections.  Moreover, this fan is \emph{complete}
(i.e.\ the union of all cones equals the ambient space $\RR^N$) and
each cone $\cc_{\mathcal{F}}$ has dimension equal to $\codim\,
\mathcal{F}=N-\dim \mathcal{F}$, where $\dim \mathcal{F}$ denotes the
dimension of the affine span of face $\mathcal{F}$.  In particular, if $\mathcal{F}$ is a
vertex, then $\cc_{\mathcal{F}}$ is full dimensional.  We call these
full-dimensional cones \emph{chambers}.  The inner normal fan of the
BME polytope will be referred to as the \emph{BME fan}.  We refer the
reader to \citep[][Chapter 1]{Ewald} for a complete exposition of
normal fans.
\begin{remark}
  From the previous discussion we see that the BME criterion is
  equivalent to the membership of a dissimilarity map $D$ to a
  \emph{chamber} in the \emph{BME fan}.
 Thus $D$ belongs to the interior of a chamber in the BME fan \emph{if and only if} the BME tree of $D$ is unique.
 The boundary of these chambers is the volume zero set having multiple BME trees (discussed earlier in this section).
\end{remark}

Since the BME polytope encodes the problem of finding the BME tree of
a dissimilarity map, it is worth understanding its structure. 
Some of its combinatorial properties have been studied for small number of taxa, although several questions remain open for $n\geq 6$.
  We investigate some of its features below, as described by \citep{BMEPolytope}.

The vertices of the BME polytope correspond to the points $(\ww^t_{ij})_{i,j}$ where $t$ is a \emph{trivalent} tree, for a total of $(2n-5)!!$ vertices \citep[Lemma 2.33]{ASCB}.
Here, $(2n-5)!!=(2n-5)\cdot (2n-3)\cdots 3\cdot 1$.
In addition, the vector $\ww^s_{ij}$ associated to the star tree $s$ (the tree with a single internal node) lies in the interior of the polytope, whereas all other
points $\ww^t$ lie on its boundary \citep[Lemma 2.1]{BMEPolytope}.

The \emph{dimension} of the BME polytope (i.e.\ the dimension of the affine space spanned by this polytope) is $\binom{n}{2}-n$.
  The polytope is  not full-dimensional because, after translation to  the origin, the orthogonal complement of its affine span is spanned by the $n$ \emph{shift vectors} $\{\shift_{\mathbf{a}}\colon \mathbf{a}\in \{1, \ldots, n\}\}$.
Here, the shift vector $\shift_{\mathbf{a}}$ refers to a dissimilarity map in which leaf $\mathbf{a}$ is at distance 1 from all other leaves,
while all other pairwise distances are $0$.

The \emph{$f$-vector} $\mathbf{f}(\PP)\subset \RR^N$ of an
$N$-dimensional polytope $\PP$ gives the number of faces of each
dimension of $\PP$.  That is, $\mathbf{f}(\PP)_i = \#\{$faces of
dimension $i-1$ of $\PP\}$.  The $f$-vectors of BME polytopes have
been studied for up to seven taxa.  In particular, for four and five
taxa, these vectors have been completely described in \cite[Table
1]{BMEPolytope}, whereas for six and seven taxa some of the entries of
the $f$-vector have remained unknown up to now.  We were able to
compute the complete $f$-vector for six taxa by methods of tropical
geometry, using \gfan.
The resulting $f$-vector is:
\[
(105, 5460, 105945, 635265,  1715455, 2373345, 1742445, 640140, 90262).
\]
In particular, we see that the polytope has $90262$ facets.
It also has $105$ vertices, labeled by all trivalent trees on six taxa.

As a corollary of these computations, it follows that the edge graph
of the BME polytope for six taxa is the complete graph $K_{105}$ \citep{BMEPolytope}.
This says that any two vertices of the BME polytope can be connected
by an edge.
 Similar behavior occurs for four and five taxa, but this
is no longer true for seven or more taxa \citep{BMEPolytope}.

By construction, the BME polytope comes equipped with a natural
symmetry given by the symmetric group $\Sn_n$ on $n$ elements.
Namely, relabeling the leaves of a trivalent tree $t$ by a permutation
$\sigma\in \Sn_n$ sends $t$ to the relabeled trivalent tree $\sigma
t$, and hence the vertex $\ww^t$ to $\ww^{\sigma{t}}$.  In a similar
way, higher dimensional faces of the BME polytope will have this
symmetry.  Therefore, we can encode these symmetries in the
$f$-vector, and record the number of faces of each dimension, up to
the combinatorial action of $\Sn_n$ on all faces.
In the case of six taxa, we get:
\[
(2, 20,  182,  982,  2492, 3489, 2626, 1032, 169).
\]

We illustrate these constructions and their properties in the case of
four taxa.

\begin{example}\label{ex:triangleN=4}
  \citep{BMEPolytope} 
  Fix $n=4$.
 The points $\ww^t$ are:
\[
\ww^{((1,2),(3,4))}=\frac{1}{4}[2,1,1,1,1,2] \; ; \; 
\ww^{((1,3),(2,4))}=\frac{1}{4}[1,2,1,1,2,1] \; ; \; \]
\[
\ww^{((1,4)),(2,3))}=\frac{1}{4}[1,1,2,2,1,1] \; ; \; 
\ww^{\text{star}(4)}=\frac{1}{3}[1,1,1,1,1,1] \; ; \; 
\]
  The BME polytope is a triangle in $\RR^6$ with vertices
  $\ww^{((1,2),(3,4))}, \ww^{((1,3),(2,4))}$ and $\ww^{((1,4),(2,3))}$. It
  spans the 2-dimensional space $\{(x_{12}, x_{13}, x_{14},
  x_{23}, x_{24}, x_{34}) \in \RR^6 \colon $
$  x_{12}+x_{13}+x_{14} =  x_{12}+x_{23}+x_{24}=
  x_{13}+x_{23}+x_{34}= x_{14} + x_{24}+x_{34} =1 \}$.
\end{example}

The \emph{lineality space} of a fan is defined as the maximal linear
space contained in all cones of the fan. 
If this space is just the
origin, we say that the fan is \emph{pointed}.
 In the case of the BME fan, this linear subspace is $n$-dimensional
 with basis given by the $n$ shift vectors $\shift_{\mathbf{a}}$
 corresponding to the $n$ leaves.
 Since the lineality space lies in all cones of the fan, we can
mod out by this subspace (for example, by taking a projection to its
orthogonal complement) and reduce our study to the case of pointed
complete polyhedral fans in $\RR^{\binom{n}{2}-n}$.
  We illustrate the
construction of the BME fan and the associated pointed fan on four
taxa.

\begin{example}
Let $n=4$.
 We mod out by the lineality space $L =\big(
  \shift_{\mathbf{1}}, \shift_{\mathbf{2}}, \shift_{\mathbf{3}}, \shift_{\mathbf{4}}\big)$
 via the \emph{canonical} projection map $p\colon \RR^{\binom{n}{2}} \to
 L^{\perp} \simeq \RR^{\binom{n}{2}-n}$ to the orthogonal complement
 of the subspace $L$ given by the matrix
\[
\left(
\begin{array}{rrrrrr}
0 & 1 & -1 & -1 & 1 & 0\\
1 & 0 & -1 & -1 & 0 & 1 \end{array}
\right).
\]
We apply this projection to the BME fan, and we get a fan in $\RR^2$,
which we can plot.
 Alternatively, we project the BME polytope into 2-space and we take the inner normal fan of the resulting polytope.

  From Example~\ref{ex:triangleN=4} we know that the BME polytope is  the triangle with vertices corresponding to the three quartet trees  $((1,2),(3,4))$, $((1,3),(2,4))$ and $((1,4),(2,3))$. 
The projection $p$ maps this triangle to the triangle with vertices $(-2,4), (4,0)$ and $(-2,-2)$. Its inner normal fan consists of the rays spanned by  $r_1=(1,0)$, $r_2=(-1,-1)$ and $r_3=(0,1)$, plus the origin. 
 Figure~\ref{fig:BMECriteriaN=4} shows the quartets corresponding to the relative interior of each chamber.
  \begin{figure}[hts]
    \centering
    \includegraphics[scale=0.35]{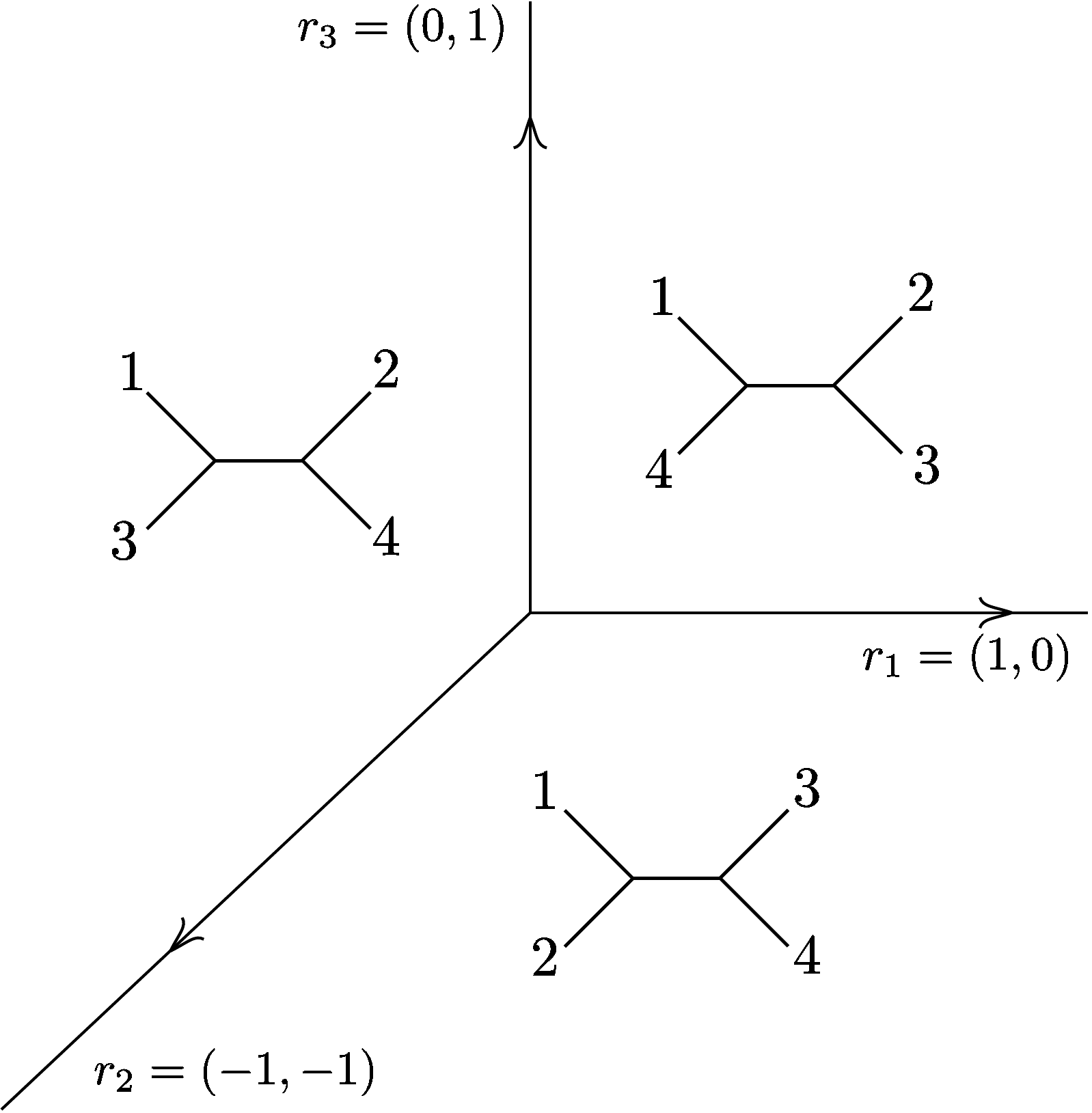}
    \caption{
    Quartets minimizing the BME criterion for each dissimilarity map on four taxa.
      }
    \label{fig:BMECriteriaN=4}
  \end{figure}
\end{example}


\section{Behavior of BME under the addition of an extra taxon}
\label{sec:behav-under-addit}

The purpose of this section is to investigate the relationship between lower and upper trees for arbitrary $D$.
Section~\ref{sec:funny-trees} shows that for any $D$ there exists a lifting such that the upper tree is as different as possible from the lower tree in terms of splits.
Section~\ref{sec:restrict-top} provides a counterpoint by
demonstrating that certain combinations of lower and upper trees are
not possible, i.e.\ that a rogue taxon cannot affect a BME tree in
arbitrary ways.

\begin{notation}
  Throughout the remainder of the paper, we label our taxa by $[n]=\{1, \ldots, n\}$.  
We write $\RR_+$ for the set of non-negative reals.
\end{notation}

\subsection{A theorem demonstrating the existence of unusual upper trees}
\label{sec:funny-trees}

We show that every lower tree has an upper tree whose restriction to the lower taxa is maximally different from it in terms of the \emph{Robinson-Foulds metric} $\RF$ on tree topologies, although perhaps not in terms of \emph{quartet distance}.
The $\RF$ metric on phylogenetic trees is defined in terms of bipartitions in the tree, also called ``splits.'' 
A split in a phylogenetic tree is simply the
bipartition of the taxa induced by cutting that edge.
  For example,
the split $\{1,2\}, \{3,4\}$ is induced by cutting the internal
edge of the quartet $((1,2),(3,4))$.
  Let $\splits(t)$ denote the set
of splits of tree $t$; the distance $\RF(s,t)$ is simply one half the size of the
symmetric difference of $\splits(t)$ and $\splits(s)$ \citep{robinsonFouldsComparison81}.

The quartet distance is analogous to the Robinson-Foulds distance but with the role of splits replaced by that of quartets (induced subtrees of size four) contained in a tree. 
The naive algorithm for computation is $O(n^4)$, although it be computed in $O(n^2)$ via a simple algorithm \citep{bryantEaQuartetDist00} and in $O(n \log n)$ via a more complex algorithm \citep{brodalEaQuartetDistNlogN03}.
In this paper, $s$ will have one more taxon than $t$; we accommodate this difference for the Robinson-Foulds and quartet distances by simply taking the induced tree on $s$ given by the set of lower taxa.

\begin{theorem}\label{thm:funny}
  Let $D$ be a dissimilarity map on $n$ taxa with BME tree $t$.
  There
  exists a lifting $\tilde{D}$ whose upper tree $s$ maximizes $\RF(s,t)$ among all trees on $n$ taxa.
\end{theorem}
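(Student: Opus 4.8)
The plan is to exhibit, for a given lower tree $t$ on $n$ taxa, an explicit lifting $\tilde D$ whose upper tree $s$ has the property that restricting $s$ to the lower taxa produces a tree sharing no nontrivial split with $t$. Since $\RF$ counts half the symmetric difference of split sets, and since every tree on $n$ taxa has the same $n$ trivial pendant splits, maximizing $\RF(s,t)$ is equivalent to forcing the restricted upper tree to have an empty intersection of \emph{nontrivial} splits with $t$. The natural mechanism is to choose the distances from the new taxon $n+1$ so that, in the BME optimization for $\tilde D$, the new taxon ``attaches'' in a way that scrambles the internal structure of $t$ completely.

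First I would reduce to a clean combinatorial target. Recall from the polyhedral remark that a dissimilarity map $D$ on $n$ taxa lies in the interior of a chamber of the BME fan precisely when its BME tree is unique; so I may assume $t$ is the unique BME tree of $D$, hence $D$ lies strictly inside the chamber $\cc_t$. I then want to produce new distances $D_{i,n+1}$ for $i\in[n]$ so that $\tilde D$ lands in a chamber of the $(n+1)$-taxon BME fan corresponding to an upper tree $s$ whose restriction to $[n]$ disagrees maximally with $t$. The key computational fact I will lean on is the formula \eqref{eq:BMECriterion}: the BME length is linear in $D$, so adding a taxon only introduces the new linear terms $\sum_{i} \ww^{\tilde t}_{i,n+1} D_{i,n+1}$, and I can choose the new distances as free parameters to steer the minimization.

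The central construction I would pursue is to drive one new distance (say $D_{1,n+1}$) to be very small and the rest large, or more precisely to tune the $D_{i,n+1}$ so that the weight pattern $(\ww^s_{ij})$ of the optimal upper tree is dominated by terms incompatible with every internal edge of $t$. Concretely, I would try to force the new taxon to sit so that pairs which are ``far'' in $t$ become ``close'' in $s$: if $t$ groups taxa as sister pairs, I want $s$ to separate each such pair. A convenient candidate is to make taxon $n+1$ attach deep inside $t$ while choosing the remaining free distances so that the induced quartet on any four original taxa flips relative to $t$, guaranteeing that no split of $t$ survives restriction. Verifying that such a simultaneous flipping is achievable amounts to checking that the relevant chamber of the $(n+1)$-taxon BME fan is nonempty and meets the affine slice obtained by fixing the original $\binom{n}{2}$ coordinates to those of $D$; nonemptiness follows because the BME fan is \emph{complete}, so every direction in the free parameter subspace lands in \emph{some} chamber.

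The main obstacle I anticipate is control: the completeness of the fan guarantees that \emph{some} upper tree is selected for each choice of the new distances, but it does not immediately tell me that I can reach an upper tree whose restriction is maximally $\RF$-distant from $t$. The delicate step is showing that the slice $\{\tilde D : \tilde D|_{[n]} = D\}$ actually intersects a chamber $\cc_s$ with the desired restricted topology, rather than only chambers whose restricted trees still share a split with $t$. I would handle this by an explicit asymptotic argument: let the new distances grow along a carefully chosen ray (e.g.\ scaling a vector that penalizes each internal split of $t$), and use the linearity of $\lambda(t,D)$ in the distances to show that for large enough scaling the minimizing upper tree must break every internal split of $t$ simultaneously. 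Establishing that a single direction suffices to break \emph{all} internal splits at once — as opposed to breaking them one at a time and inadvertently recreating others — is the crux, and I expect the proof to succeed by choosing the perturbation direction to align against the normal cone of $t$ in a way that makes every split-preserving upper tree strictly suboptimal.
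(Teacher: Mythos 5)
There is a genuine gap here, in fact two. First, the concrete target your mechanism aims for is unachievable: you propose choosing the new distances so that ``the induced quartet on any four original taxa flips relative to $t$.'' No such tree exists once $n \geq 8$: as the paper's Remark~\ref{rmk:quartet} records, the maximal quartet distance between trees on eight taxa is $61 < \binom{8}{4} = 70$, so any two trees on eight or more taxa must agree on some induced quartet. Flipping every quartet is sufficient for sharing no nontrivial split, but it is far stronger than necessary, and an argument organized around it cannot go through. What your proposal is missing is an \emph{achievable} target topology together with a combinatorial proof that it shares no nontrivial split with $t$. The paper's choice is a caterpillar: two-color the leaves so that every cherry of $t$ gets one black and one white leaf, and order the taxa with all black leaves before all white ones. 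Every side of every nontrivial split of $t$ contains a cherry of $t$ and is therefore bichromatic, whereas every nontrivial split of the caterpillar separates a prefix of the ordering from a suffix, and a prefix containing a white leaf must contain \emph{all} black leaves; so the two sides cannot both be bichromatic, and no split of $t$ survives restriction.

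Second, your mechanism for steering the BME optimum --- grow the new distances along a single ray and invoke completeness of the BME fan --- does not control which chamber the slice $\{\tilde D : \tilde D|_{[n]} = D\}$ meets, as you yourself concede; the last sentence of your proposal (``I expect the proof to succeed\dots'') is exactly the statement that needs proving. Along a ray $Mv$ with $M \to \infty$, the limit only singles out the upper trees minimizing the linear functional $s \mapsto \sum_i \ww^s_{i,n+1} v_i$, which in general is attained on a whole face (many trees at once), and you give no argument that some direction $v$ makes the minimizer unique with the desired restricted topology. The paper's Lemma~\ref{lem:funny} instead uses a hierarchical scaling that traces a curve, not a ray: set $\tilde D(y,z_i) = c^{\alpha_i}$ with $1 < \alpha_1 < \cdots < \alpha_n$ and let $c \to \infty$. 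The top-order term forces $y$ to be maximally far from $z_n$ in the upper tree, the next-order term forces $y$ maximally far from $z_{n-1}$ subject to that, and so on; this lexicographic domination breaks all ties and pins the upper tree to a single caterpillar whose restriction to the lower taxa is prescribed. Combining that lemma with the coloring argument above is the entire proof; both ingredients are absent from your proposal.
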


This theorem will follow easily from the following lemma.

\begin{lemma}
  Given an ordering of $n$ taxa $z_1,\dots,z_n$ and any distance
  matrix $D$ on taxa $\{z_i\colon 1\leq i \leq n\}$, there exists a lifting
  $\tilde{D}$ such that the BME tree for $\tilde{D}$ restricted to
  $z_1,\dots,z_n$ is the caterpillar tree
  $(z_1,(z_2,\dots,(z_{n-1},z_n)\dots)$.
  \label{lem:funny}
\end{lemma}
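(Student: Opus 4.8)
The plan is to choose the new distances $D_{z_i,n+1}$ so large that the BME length $\lambda(t,\tilde D)$ is dominated by the terms involving the new taxon. Writing $\lambda(t,\tilde D)=\sum_{1\le i<j\le n}\ww^t_{ij}D_{ij}+\sum_{i=1}^n \ww^t_{z_i,n+1}D_{z_i,n+1}$, the first sum $g(t)$ is bounded uniformly over the finitely many trees, so if I set $D_{z_i,n+1}=M c_i$ for a fixed strictly decreasing sequence $c_1>c_2>\dots>c_n\ge 0$ (for instance $c_i=n-i$) and let $M\to\infty$, the BME tree of $\tilde D$ will be whichever tree minimizes the dominant linear functional $\Phi(t)=\sum_{i=1}^n c_i\,\ww^t_{z_i,n+1}$. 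I therefore isolate this minimization as the combinatorial core of the argument.

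To analyze $\Phi$, I would use the interpretation of $\ww^t_{z_i,n+1}=2^{-|p^t_{z_i,n+1}|}$ as a dyadic weight. Fixing the new leaf $n+1$ and rooting a trivalent $(n+1)$-taxon tree $t$ at the internal neighbor $r$ of $n+1$ gives a bijection between such trees and rooted binary trees $B$ on the leaf set $\{z_1,\dots,z_n\}$; under this bijection $|p^t_{z_i,n+1}|$ equals the depth $d_i$ of $z_i$ in $B$, so $\ww^t_{z_i,n+1}=2^{-d_i}$ and the Kraft identity gives $\sum_i 2^{-d_i}=1$. Thus minimizing $\Phi$ over trees becomes minimizing $\sum_i c_i 2^{-d_i}$ over the depth vectors of rooted binary trees on $n$ leaves. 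Crucially, the caterpillar $(z_1,(z_2,\dots,(z_{n-1},z_n)\dots))$ on $n+1$ taxa, with $n+1$ as the cherry partner of $z_n$, corresponds to the depth vector $(d_{z_1},\dots,d_{z_n})=(n-1,n-1,n-2,\dots,2,1)$, and deleting leaf $n+1$ from it and suppressing the resulting degree-two vertex yields exactly the claimed caterpillar on $z_1,\dots,z_n$; so it suffices to prove that this depth vector is the unique minimizer.

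The heart of the proof is this discrete optimization. I would first apply a rearrangement (exchange) argument: since the $c_i$ are strictly decreasing, any minimizer must assign depths antitonically, i.e.\ smaller coefficient to smaller depth, forcing the cheapest leaf $z_n$ to be shallowest. I would then argue inductively: a full binary tree on $n\ge 3$ leaves admits at most one leaf at depth $1$, and assigning $z_n$ this maximal weight $\tfrac12$ strictly decreases $\Phi$; peeling off $z_n$ leaves a rooted binary tree on $z_1,\dots,z_{n-1}$ to which the same reasoning applies, so the caterpillar depth vector is forced and is the unique minimizer up to swapping $z_1,z_2$ inside the terminal cherry (which produces the same unrooted tree). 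Verifying that this greedy placement genuinely minimizes the functional, rather than merely being a plausible local move, is the step I expect to require the most care; an exchange argument comparing a depth-$1$ placement of $z_n$ against any alternative, combined with induction on $n$, should close it. Finally I make the limit precise: letting $\Phi^\ast$ be the minimum of $\Phi$, letting $\delta>0$ be the (strictly positive, by uniqueness) gap to the next value of $\Phi$ over the finite set of trees, and letting $G$ bound $|g(t)|$, for $M>2G/\delta$ the unique minimizer of $\lambda(\cdot,\tilde D)$ coincides with that of $\Phi$, namely the caterpillar on $n+1$ taxa. Its restriction to $z_1,\dots,z_n$ is the desired caterpillar, and uniqueness shows $\tilde D$ lies in the interior of a BME chamber, so the BME tree is well defined and trivalent, completing the proof.
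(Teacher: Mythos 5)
Your route is correct in outline but genuinely different from the paper's, and the step you defer to the end is precisely the step the paper's construction is designed to avoid. The paper also sends the new distances to infinity, but along \emph{exponentially separated} scales: it sets $\tilde{D}(y,z_i)=c^{\alpha_i}$ with $1<\alpha_1<\dots<\alpha_n$ and lets $c\to\infty$, so that the single term $\ww^s_{n,n+1}c^{\alpha_n}$ dominates all others and forces $z_n$ to lie at maximal path distance from $y$ (the opposite end of a caterpillar), then $\ww^s_{n-1,n+1}c^{\alpha_{n-1}}$ dominates what remains and forces $z_{n-1}$ to be the cherry partner of $z_n$, and so on: the topology is pinned down one cherry at a time, and no global combinatorial optimization is ever solved. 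Your single-scale lifting $\tilde{D}(z_i,y)=Mc_i$ makes all leaves compete simultaneously, so everything rests on the discrete lemma that the caterpillar depth vector $(n-1,n-1,n-2,\dots,1)$ is the \emph{unique} minimizer of $\Phi=\sum_i c_i 2^{-d_i}$ over rooted binary trees. Note that your antitone-exchange step alone cannot do this (it only permutes labels on a fixed topology, and e.g.\ the balanced tree is trivially antitone); the greedy placement step carries all the weight, and you leave it at ``should close it.''

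That lemma is true, and your plan for it works; here is what closes it. If a minimizer has $d_n\ge 2$, detach $z_n$, suppress its parent, and reattach $z_n$ at a new root above the whole tree. Leaves under $z_n$'s former sibling keep their weights, $z_n$'s weight becomes $\tfrac12$, and every other leaf's weight is halved, so $\Phi$ changes by $c_n\bigl(\tfrac12-2^{-d_n}\bigr)-\tfrac12\sum_{i\in A}c_i2^{-d_i}$, where $A$ is the (nonempty, since $d_n\ge2$) set of leaves outside the sibling subtree of $z_n$. Since $\sum_{i\in A}2^{-d_i}=1-2^{1-d_n}$ by the Kraft identity and $c_i>c_n$ for every $i\in A$, this change is strictly negative. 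Hence $z_n$ has depth $1$ in any minimizer; writing $\Phi=c_n/2+\tfrac12\Phi'$ for the sibling subtree, induction on $n$ forces the caterpillar and gives uniqueness up to the $z_1\leftrightarrow z_2$ swap, which is the same unrooted tree. With that supplied, the rest of your argument is sound: the rooting bijection and Kraft identity, the positive gap $\delta$ over the finitely many topologies, the choice $M>2G/\delta$, and the remark that uniqueness puts $\tilde{D}$ in the interior of a chamber so degenerate trees cannot tie. The trade-off between the two proofs: the paper's is shorter because scale separation trivializes the combinatorics; yours isolates a clean, reusable statement about the functional $\sum_i c_i2^{-d_i}$ (a Huffman-like extremal property of the caterpillar), but that statement is real work and must actually be proved, not just localized.
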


\begin{proof}
  Pick arbitrary numbers $1 < \alpha_1 < \dots < \alpha_n$.
  Let $y$
  denote the extra ``rogue'' taxon. We construct a family of liftings
  $\tilde{D}^c$ as an exponential function for a given base number
  $c$.
  Set $\tilde{D}^c(y, z_i) = c^{\alpha_i}$.

  We write the BME length as
  \[ 
  \lambda(s,\tilde{D}^c) = \sum_{1\leq i<j\leq n} \ww^s_{i,j} D_{i,j} + \sum_{1\leq i\leq n} \ww^s_{i,n+1} c^{\alpha_i}.
  \] 
  As $c$ goes to infinity, the dominant term in the summation becomes $\ww^s_{n,n+1} c^{\alpha_n}$. 
  For $c$ greater than some $c_n$, the BME tree must be a caterpillar tree with $y$ as far as possible from $z_n$. 
  Indeed, any other topology would have a smaller coefficient for $c^{\alpha_n}$.
  We can repeat the same argument replacing $n-1$ for $n$, finding a $c_{n-1}$ such that for $c \ge c_{n-1}$ the BME tree must be a caterpillar tree with $y$ as far as possible from the subtree $(z_{n-1},z_n)$.
  Continue in this way until a large enough lower bound on $c$ is found such that the described caterpillar tree is the BME tree for $\tilde{D}^c$.
\end{proof}

With this lemma, all that is needed to prove Theorem~\ref{thm:funny}
is to show that there exists a caterpillar tree $s$ such that the
restriction of the caterpillar to the original taxa has maximal
$\RF(s,t)$.

\begin{proof}[Proof of Theorem~\ref{thm:funny}]
  Color the taxa of $t$ with black and white colors as follows: for
  every cherry (two taxon subtree) of $t$, color one taxon white and
  the other black, and color the remaining taxa arbitrarily.
  Now
  order the taxa with all of the black taxa first and all of the white
  taxa second.
  The caterpillar tree from Lemma~\ref{lem:funny} using
  this ordering will have the required maximal $\RF$.
\end{proof}

\begin{remark}
  \label{rmk:quartet}
  The extension of Theorem~\ref{thm:funny} to quartet distances does not hold for more than seven taxa.
  Indeed, let $t$ be $(1,((((((2,3),4),5),6),7),8))$. 
  The maximally quartet-distant trees on 8 taxa (of quartet distance 61) are the following non-caterpillars:
  \[
  \begin{aligned}
    & (1,(2,((((3,8),5),(4,7)),6))) \\
    & (1,(2,((((3,8),6),(4,7)),5))) \\
    & (1,(((((2,8),5),(4,7)),6),3)) \\
    & (1,(((((2,8),6),(4,7)),5),3)). \\
  \end{aligned}
  \]
  These trees were found by our code and distances were confirmed with the \texttt{qdist} program of \citet{qdist}.
\end{remark}
  One could perform a similar analysis for the path distance metric of \citet{steelPennyTreeMetrics93}, although we have not done so.
 
\subsection{A theorem restricting topology of upper trees}
\label{sec:restrict-top}
The previous section shows that the lower and upper trees can be quite different.
It is natural then to ask about the collection of possible upper trees for a given lower tree.
That is, if we have a dissimilarity map $D$ on
$n$ taxa with BME tree $t$, what are the possible BME trees $s$ for
liftings of $D$?  This question narrows the potential effect of rogue
taxa.

We first gain intuition by investigating the case of four taxa.
This setting is simple, as there is only one trivalent tree topology on five taxa (up to relabeling of its leaves).

Using \polymake\ one can show that all but two tree topologies can be realized as upper trees for a lower quartet. 
The two trees not above $((1,2),(3,4))$ are shown in Figure~\ref{fig:notAbovet}.
\begin{figure}[ht]
  \centering
  \includegraphics[height=2cm]{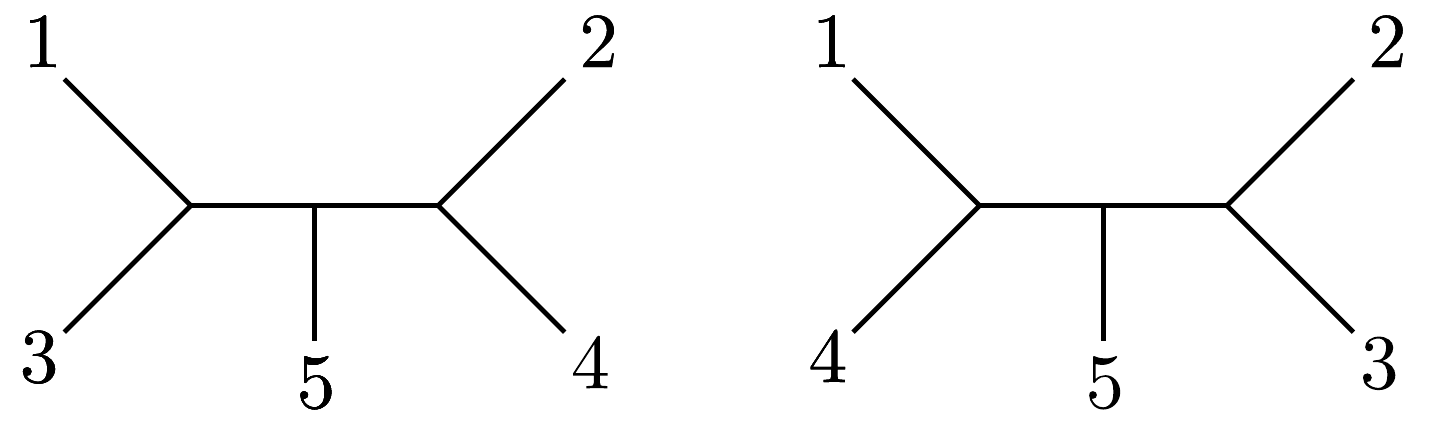}
  \caption{The trees that do not sit above $((1,2),(3,4))$ for any lifting of a dissimilarity map $D$ with BME tree $((1,2),(3,4))$.}
 \label{fig:notAbovet}
\end{figure}

This example can be established analytically and generalized to the
case of more taxa by replacing the leaves $1$ through $4$ with rooted
subtrees $a$ through $d$.
  In particular, we show that we can never
obtain a tree where pairs of subtrees are exchanged ``over'' the extra taxon.

Let $y$ denote the new leaf to be attached. 
The original tree $t$ is the tree $((a,b),(c,d))$. 
Call $s$ the tree $((a,c),(b,d))$ as in Figure~\ref{fig:clustersSAndT}.
\begin{figure}[ht]
  \centering
 \includegraphics[width=12cm]{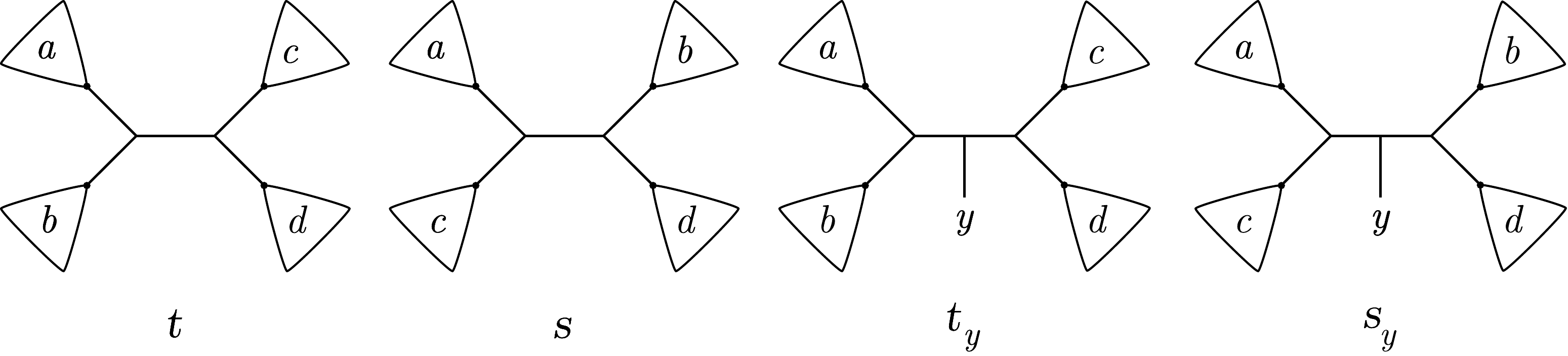}
 \caption{The trees $t$, $s$, $t_y$ and $s_y$.}
  \label{fig:clustersSAndT}
\end{figure}

\begin{theorem}\label{thm:NeverSx}
  Let $D$ be a dissimilarity map such the BME score of $t=((a,b),(c,d))$  is strictly less than that of $s=((a,c),(b,d))$ (Figure~\ref{fig:clustersSAndT}).
  Then the BME score of $t_y:=((a,b),y,(c,d))$ is strictly less than that of $s_y:=((a,c),y,(b,d))$ for any lifting $\tilde{D}$ of $D$.
  Consequently, if $t$ is the BME tree for $D$, then $s_y$ \emph{cannot} be a BME tree for any lifting $\tilde{D}$.
\end{theorem}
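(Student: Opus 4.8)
The plan is to show that for every lifting the upper score difference $\lambda(t_y,\tilde D)-\lambda(s_y,\tilde D)$ is a fixed \emph{positive} multiple of the lower difference $\lambda(t,D)-\lambda(s,D)$; since the latter is assumed negative, the former is forced to be negative as well. The engine behind this is a factorization of the BME weights adapted to the four subtrees $a,b,c,d$. First I would group the original taxa into the four leaf sets $A,B,C,D$ of $a,b,c,d$ and split each BME sum \eqref{eq:BMECriterion} into within-cluster pairs and cross-cluster pairs. For a pair lying in the same cluster, the connecting path stays inside that subtree and is identical in all four trees $t,s,t_y,s_y$, so these contributions cancel in both differences.

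For a cross-cluster pair $i\in X$, $j\in Y$ I would factor its weight (in any one of the four trees) as $\rho_X(i)\,\rho_Y(j)\,m_{XY}$, where $\rho_X(i)$ collects the factors $(\deg v-1)^{-1}$ over the internal vertices on the path from $i$ up to the root $r_X$ of its cluster, and $m_{XY}$ is the product over the remaining, purely central vertices. Because $r_X$ is a degree-three vertex and the internal structure of each subtree is unchanged throughout, $\rho_X(i)$ is tree-independent, while $m_{XY}$ is exactly the BME weight between the leaves $X$ and $Y$ in the reduced four- (resp.\ five-) leaf tree obtained by contracting each subtree to a point. Writing $\bar D_{XY}=\sum_{i\in X,\,j\in Y}\rho_X(i)\rho_Y(j)D_{ij}$ for the aggregated cross-cluster distance, the cross-cluster part of $\lambda$ becomes $\sum_{X\neq Y}m_{XY}\bar D_{XY}$, and $\bar D_{XY}$ is literally the same number downstairs and upstairs, since a lifting leaves $D$ unchanged on the original taxa.

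With this in hand both differences reduce to linear combinations of the same quantities $\bar D_{XY}$ with coefficients read off from the reduced weights. Downstairs the reduced weights of $((a,b),(c,d))$ versus $((a,c),(b,d))$ are $\tfrac12$ on the two cherries and $\tfrac14$ elsewhere, giving $\lambda(t,D)-\lambda(s,D)=\tfrac14\bigl(\bar D_{AB}-\bar D_{AC}-\bar D_{BD}+\bar D_{CD}\bigr)$. Upstairs I would record the reduced five-leaf weights of $t_y$ and $s_y$: each cherry still has weight $\tfrac12$, the separated cluster pairs drop to $\tfrac18$, and, crucially, every pair involving $y$ has weight $\tfrac14$ in \emph{both} trees, since $y$ lies two internal edges from every cluster in each tree. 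Hence all $y$-terms cancel, and the surviving combination is $\lambda(t_y,\tilde D)-\lambda(s_y,\tilde D)=\tfrac38\bigl(\bar D_{AB}-\bar D_{AC}-\bar D_{BD}+\bar D_{CD}\bigr)=\tfrac32\bigl(\lambda(t,D)-\lambda(s,D)\bigr)$. As $\tfrac32>0$ and does not depend on the lifting, strict negativity of the lower difference yields strict negativity of the upper one. The final consequence is then immediate: if $t$ is the unique BME tree for $D$ then $\lambda(t,D)<\lambda(s,D)$, so $\lambda(t_y,\tilde D)<\lambda(s_y,\tilde D)$ exhibits a tree strictly beating $s_y$, whence $s_y$ cannot be a BME tree for $\tilde D$.

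The step I expect to be the main obstacle is making the weight factorization $\rho_X(i)\rho_Y(j)\,m_{XY}$ fully rigorous, that is, the degree bookkeeping at the cluster roots (verifying $\deg r_X=3$ in all four trees and treating the degenerate case where a cluster is a single leaf, so that $r_X$ is not internal and $\rho_X\equiv 1$), together with checking that the cancellations of the within-cluster and the $y$-involving terms are exact. Once that combinatorial bookkeeping is pinned down, the remainder is the short finite verification of the reduced weights recorded above.
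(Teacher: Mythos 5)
Your proposal is correct and takes essentially the same approach as the paper: your factorization $\ww^{u}_{ij}=\rho_X(i)\rho_Y(j)\,m_{XY}$ is exactly what underlies the paper's relations $\ww^{s_y}_{\ssa\ssb}=\ww_{\ssa\ssb}/4$, $\ww^{s_y}_{\ssa\ssc}=2\ww_{\ssa\ssc}$, etc., and both arguments discard the within-cluster and $y$-involving terms and reach the same identity $\lambda(s_y,\tilde D)-\lambda(t_y,\tilde D)=\tfrac32\bigl(\lambda(s,D)-\lambda(t,D)\bigr)$. The bookkeeping you flag as the main obstacle is unproblematic, since the cluster roots have degree three in all four trees and single-leaf clusters just give $\rho_X\equiv 1$, so your verification of the reduced weights (including the factor $\tfrac32$) matches the paper's computation exactly.
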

\begin{proof}
  We denote with sans serif font the elements in each subtree, so $\ssa$ denotes a leaf in subtree $a$, etc. 
  For simplicity we abbreviate $\ww^t$ by $\ww$.
  By definition, we get
  \[
  \ww^{s_y}_{\ssa\ssb}=\ww_{\ssa\ssb}/4 \; ; \; \ww^{s_y}_{\ssa\ssc}=2
  \ww_{\ssa\ssc} \; ; \; \ww^{s_y}_{\ssa\ssd}=\ww_{\ssa\ssd}/2 \; ; \;
  \ww^{s_y}_{\ssb\ssc}=  \ww_{\ssb\ssc}/2 \; ; \; \ww^{s_y}_{\ssb\ssd}=2
  \ww_{\ssb\ssd} \; ; \; \ww^{s_y}_{\ssc\ssd}= \ww_{\ssc\ssd}/4;
  \]
  \[
  \ww^{t_y}_{\ssa\ssb}=\ww_{\ssa\ssb} \; ; \; \ww^{t_y}_{\ssa\ssc}=
  \ww_{\ssa\ssc}/2 \; ; \; \ww^{t_y}_{\ssa\ssd}= \ww_{\ssa\ssd}/2 \; ; \;
  \ww^{t_y}_{\ssb\ssc}= \ww_{\ssb\ssc}/2 \; ; \; \ww^{t_y}_{\ssb\ssd}=
 \ww_{\ssb\ssd}/2 \; ; \; \ww^{t_y}_{\ssc\ssd}= \ww_{\ssc\ssd}.
  \]
  Similarly,
  \[
  \ww^{s}_{\ssa\ssb}=\ww_{\ssa\ssb}/2 \; ; \; 
  \ww^{s}_{\ssa\ssc}=2 \ww_{\ssa\ssc} \; ; \; 
  \ww^{s}_{\ssa\ssd}=\ww_{\ssa\ssd} \; ; \;
  \ww^{s}_{\ssb\ssc}=  \ww_{\ssb\ssc} \; ; \; 
  \ww^{s}_{\ssb\ssd}=2 \ww_{\ssb\ssd} \; ; \; 
  \ww^{s}_{\ssc\ssd}= \ww_{\ssc\ssd}/2.
  \]
  Since we are interested in the difference between the two scores, we
  do not compute the weights w.r.t.\ leaf $y$ nor weights within a
  cluster, since both trees have the same weight in these two cases.
  Then for any given lifting $\tilde{D}$ we have by subtraction
  \[
  \begin{split}
    \lambda(s_y, \tilde{D})-\lambda(t_y, \tilde{D})= 
3/2\big(\lambda(s,D)-\lambda(t,D)\big).
  \end{split}\]
  The term on the right-hand side is positive by hypothesis.  
\end{proof}

\section{Liftings of tree metrics}
\label{sec:new-approach}

In the previous section, we analyzed the relationship between the
lower and upper trees for liftings of a general dissimilarity map $D$.
For a practicing phylogeneticist, however, this provides limited
useful information.
  Indeed, the basic assumption of phylogenetic
inference is that the data evolves in a primarily tree-like
manner.
 Namely, in distance-based inference, the assumption is that
the given dissimilarity map is ``close'' to a tree metric.
  In the
rogue setting, we are interested in $n$ taxa which evolve in a
tree-like manner and one, the rogue, that does not.

In this section we formalize these notions by assuming that $D$ is a
tree metric with respect to the tree topology $t$.
  By the consistency
of BME inference, the lower tree will be $t$.
  With this assumption,
our primary interest will be in understanding how the upper tree can
differ from $t$ in the sorts of situations more likely to be
encountered in phylogenetics.
  Although Theorem~\ref{thm:funny}
provides an interesting theoretical result in this vein, the required
lifting is quite unlikely to appear in data.
  By reformulating the
problem below directly in terms of the branch lengths of the tree
metric, we are able to obtain more precise and relevant information
about the action of rogue taxa.

\subsection{Preliminaries}
\label{sec:preliminaries}
\begin{notation}
  Given a positive integer $n$, we define $\cD_n$ to be the cone of
  dissimilarity maps on $n$ taxa.
  We identify $\cD_n$ with
  $\RR_+^{\binom{n}{2}}$.  
Similarly,   we define $\cT_n\subset \cD_n$ to be  the space of tree metrics on $n$ taxa.
  We omit the subscript $n$ whenever it is clear from the context.
  Finally, given a tree topology $t$, we denote by $\cT_t\subset \cT_n$ the set of tree metrics with underlying tree topology $t$.
\end{notation}
\begin{notation}
  Given a trivalent tree $t$, the BME cone $\cc_{\ww^t}$ associated to $t$
  will be denoted by $\cc_t$.
 Moreover, we call $\cc_t^+=\cc_t \cap
  \RR_+^{\binom{n}{2}}$ the \emph{positive BME cone} of $t$, also
  known as the BME cone of dissimilarity maps associated to $t$.
\end{notation}
  \begin{notation}
    In what follows, we write $\PP_n$ for the BME polytope on $n$
    taxa.
 If the number of taxa is understood, we omit the subscript.
  \end{notation}
 
Given a tree topology $t$ on $n$ taxa, let $\pi_{t}\colon \RR_+^{\binom{n}{2}}\to \RR^{2n-3}$ denote a map generalizing the branch length map for tree metrics as follows.
The coordinates of this map are indexed by the branches of the tree $t$, and each coordinate is a linear function on the metric cone whose value on tree metrics with topology $t$ is precisely the length of the corresponding edge.
Note that this linear function is not unique, and it is positive on tree metrics with topology $t$. 
An expression defining the coordinate $e$ of the map $\pi_t$ (that is, the branch length of $e$) can be obtained by the four-point condition equations~\citep[Theorem 2.36]{ASCB} characterizing the tree topology $t$. 
For example, let $t= ((1,2),(3,4))$, let $e_i$ be the edge
adjacent to leaf $i$, let $e$ be the internal edge, and let $b_{e_i},
b_{e}$ be their corresponding lengths. Then $\pi_{t}(D) := (b_{e_1}(D), b_{e_2}(D), b_{e_3}(D),  b_{e_4}(D), b_e(D))$, where $b_{e_1}(D)=(D_{31}-D_{32}+D_{12})/2$,  $b_{e_2}(D)=(D_{32}-D_{31}+D_{12})/2$,  $b_{e_3}(D)=(D_{23}-D_{24}+D_{34})/2$, $b_{e_4}(D)=(D_{24}-D_{23}+D_{34})/2$,  and $b_e(D)=(D_{13}+D_{24}-D_{12}-D_{34})/2$. 
The map $\pi_t$ has the property that it identifies the cone of tree metrics realizing $t$ with $\RR_+^{2n-3}$.
 
Our goal for this subsection is to understand the interplay between the
branch lengths of a tree metric $D\in \cT_t$ and the possible upper
trees one can obtain by lifting this metric. In particular, we wish to
characterize the branch lengths of lower trees admitting a prescribed
upper tree $s$. 
It is clear that if we start from a tree metric $D=d_t$ and its
corresponding branch length vector $\pi_{t}(D)$,  we can easily lift
$D$ to a tree metric $\tilde{D}$ whose underlying tree $s$ contains $t$ as a subtree. 
Hence, the union of the sets $\{\pi_{t}(D)\colon D \ \mathrm{s.t.} \ \exists\, \tilde{D}\in
\cc_s^+\}$ as $s$ varies among a possible upper BME trees equals the
set $\RR_+^{2n-3}$. 
We want to understand each one of these sets. In particular, we want
to answer the following challenge:
\begin{problem}\label{pb:partUpper}
  Given a tree topology $t$ on $n$ taxa and $s\in \cT_{n+1}$, describe
  the cone of dissimilarity maps on $n+1$ taxa whose BME tree equals
  $s$ and whose restriction to the first $n$ taxa is a tree metric of
  combinatorial type $t$.  
\end{problem}

For each upper tree $s$, the elements of the corresponding set in
Problem~\ref{pb:partUpper} can be thought of as vectors in $\RR_+^{3n-3}$,
where the first $2n-3$ entries
encode the branch lengths of the lower tree $t$ and the remaining ones
refer to distances to the new taxon. That is,
\begin{equation}
X_s(t):=\{ (\pi_{t}(D), \tilde{D}_{1,n+1}, \ldots, \tilde{D}_{n,n+1}) \colon D \in
\cT_t, \tilde{D} \in \cc_s^+\}.
\end{equation}
By construction, these sets are polyhedral cones and they partition
the set $\RR_+^{3n-3}$:
\begin{proposition}
  $X_s(t)$ is a rational (possibly empty) polyhedral cone for every $s$ and $t$.
 It is described by two types of \emph{homogeneous linear} constraints:
\begin{itemize}
\item 
all entries $\tilde{D}_{ij}\geq 0$ and $\pi_{t}(D)\geq 0$.
\item 
 inequalities describing $\cc_s$: they correspond to the directions $\ww^s-\ww^u$ for all  trivalent trees $u$ on $n+1$ taxa, and all constants are zero. 
That is: $\langle \ww^s-\ww^u, \tilde{D}\rangle \geq 0$, for all trivalent trees $u$.
\end{itemize}
\end{proposition}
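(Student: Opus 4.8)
The plan is to realize $X_s(t)$ literally as the solution set of an explicit finite system of homogeneous rational linear inequalities in its own coordinates $(\vec b,\vec d)=(\pi_t(D),\tilde D_{1,n+1},\dots,\tilde D_{n,n+1})\in\RR^{2n-3}\times\RR^{n}$; the three assertions (polyhedral, rational, of the advertised form) then fall out at once. The one device I need is a linear map that \emph{reconstructs} the full $(n+1)$-taxon dissimilarity vector from this record. Since $\pi_t$ restricts to a bijection $\cT_t\xrightarrow{\sim}\RR_+^{2n-3}$, I introduce the linear map $L_t\colon\RR^{2n-3}\to\RR^{\binom n2}$ sending a branch-length vector $\vec b$ to the additive metric whose $(i,j)$ entry is the sum of the $b_e$ over the edges $e$ of $t$ on the $i$--$j$ path; this is visibly linear and satisfies $\pi_t\circ L_t=\mathrm{id}$ (two linear maps agreeing on the full-dimensional cone $\RR_+^{2n-3}$ agree everywhere), with $L_t(\RR_+^{2n-3})=\cT_t$. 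Appending the new distances unchanged gives the linear map $\Phi\colon\RR^{2n-3}\times\RR^{n}\to\RR^{\binom{n+1}2}$, $\Phi(\vec b,\vec d)=(L_t(\vec b),\vec d)$, which recovers $\tilde D$ from $(\vec b,\vec d)$.

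With $\Phi$ fixed I would prove the set identity $X_s(t)=\{(\vec b,\vec d):\vec b\ge0,\ \vec d\ge0,\ \Phi(\vec b,\vec d)\in\cc_s\}$ by a short two-way inclusion. For ``$\subseteq$'', a point of $X_s(t)$ arises from some $D\in\cT_t$ with $\tilde D=(D,\vec d)\in\cc_s^+$; then $\vec b=\pi_t(D)\ge0$ because $D$ is a tree metric of type $t$, $\vec d\ge0$ because $\tilde D\in\RR_+^{\binom{n+1}2}$, and $\Phi(\vec b,\vec d)=\tilde D\in\cc_s$ since $D=L_t(\vec b)$. For ``$\supseteq$'', given $(\vec b,\vec d)$ on the right I set $D:=L_t(\vec b)$---a genuine metric in $\cT_t$ because $\vec b\ge0$---and $\tilde D:=\Phi(\vec b,\vec d)=(D,\vec d)$; here the lower entries $\tilde D_{ij}$ $(i,j\le n)$ are sums of nonnegative $b_e$ and hence nonnegative, the new entries equal $\vec d\ge0$, so $\tilde D\in\RR_+^{\binom{n+1}2}$, while $\Phi(\vec b,\vec d)\in\cc_s$ upgrades this to $\tilde D\in\cc_s^+$; as $\pi_t(D)=\vec b$ this exhibits $(\vec b,\vec d)\in X_s(t)$.

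It then remains to read off the two advertised families of constraints. The conditions $\vec b\ge0$ and $\vec d\ge0$ are precisely $\pi_t(D)\ge0$ together with the nonnegativity of the new distances $\tilde D_{i,n+1}$; the remaining entrywise bounds $\tilde D_{ij}\ge0$ for $i,j\le n$ are automatic, being implied by $\vec b\ge0$ through $L_t$, so they may be listed among the constraints without changing the cone. The membership $\Phi(\vec b,\vec d)\in\cc_s$ unwinds, through the normal-fan description of $\cc_s=\cc_{\ww^s}$, into exactly one homogeneous inequality per vertex of the BME polytope on $n+1$ taxa---that is, one per trivalent tree $u$---expressing that $\ww^s$ attains the minimal inner product with $\tilde D$; after the substitution $\tilde D=\Phi(\vec b,\vec d)$ each of these becomes a homogeneous linear inequality in $(\vec b,\vec d)$ with zero constant term, matching the second family of constraints in the statement. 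Because the four-point formulas defining $\pi_t$ and $L_t$, and the weights $\ww^u$ (products of factors $(\deg-1)^{-1}$), are all rational, every coefficient is rational; hence $X_s(t)$ is a rational polyhedral cone, possibly reducing to the origin when no such lifting exists.

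The step I expect to demand the most care is the well-definedness of the reconstruction: establishing that $L_t$ is a genuine two-sided linear inverse of $\pi_t$ on the relevant cone and that $L_t(\RR_+^{2n-3})=\cT_t$, keeping in mind that $\pi_t$ is a branch-length map only on $\cT_t$ (off that cone its coordinates carry no tree-metric meaning). The companion point worth checking rather than waving through is that the lower-pair nonnegativities $\tilde D_{ij}\ge0$ $(i,j\le n)$ are genuinely redundant consequences of $\vec b\ge0$, so that the clean two-type description in the statement is both complete and correct.
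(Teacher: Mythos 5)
Your proposal is correct and takes essentially the same route as the paper: the paper's proof likewise rests on the observation that the lower entries of $\tilde{D}$ are linear combinations (path sums) of the coordinates of $\pi_{t}(D)$, which is exactly your reconstruction map $L_t$, and it uses this to transfer the nonnegativity constraints and the inequalities $\langle \ww^s-\ww^u,\tilde{D}\rangle\geq 0$ (taken over all trivalent $u$, not just those adjacent to $\ww^s$) into the coordinates $(\pi_{t}(D),\tilde{D}_{1,n+1},\dots,\tilde{D}_{n,n+1})$. Your write-up simply makes explicit, via $\pi_t\circ L_t=\mathrm{id}$ and the two-way inclusion, what the paper states tersely as ``the inequalities follow by construction.''
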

\begin{proof}
  $X_s(t)$ is a polyhedral cone because it is the image of the linear
  map $\tilde{D}\mapsto
(\pi_{t}(\tilde{D}\restr{[n]}), \tilde{D}_{1,n+1},
  \ldots, \tilde{D}_{n,n+1})$, where $\tilde{D}\in
  \cc_s^+\cap(\cT_t\times \RR^n_+)$. 
 The inequalities describing  $X_s(t)$ follow by construction. 
The entries of $\tilde{D}\restr{[n]}$ are expressed as linear
combinations of  the entries $\pi_{t}(\tilde{D}\restr{[n]})$.
The second group of inequalities include facet inequalities of the
cone $\cc_s$: whose directions are given by the edges containing vertex $\ww^s$.  
To  simplify the construction, we add the inequalities coming from
  differences between $\ww^s$ and all other vertices of $\PP$ and
  not only of vertices $\ww^u$ adjacent to $\ww^s$. 
Adding these inequalities makes no harm and it simplifies the problem by avoiding the computation of the edges adjacent to $\ww^s$, which can be hard if the number of taxa is too big.
\end{proof}

\subsection{The reduced BME polytope}
\label{sec:reduced-bme-polytope}
We now present an equivalent approach to our lifting task in the setting of this section, i.e.\ when $D$ is a tree metric on $n$ taxa with (trivalent) tree $t$ and branch lengths ${\bf b}_e$.
  As shown below, all that is needed to study the
restricted BME problem is a change of order of summation followed by a
grouping of appropriate terms.
  This small modification reduces the
problem from having a quadratic number of free variables to a linear
number, as well as simplifying the constraints.
  After introducing the
reduced polytope, we show that it has dimension $2n-4$ by
characterizing its affine hull.

The set of edges of $t$ will be denoted by $E(t)$.
Pick any lifting $\tilde{D}$ of $D$, and any tree $s$ with $n+1$ leaves. 
The BME length of $s$ with respect to $\tilde{D}$ can be calculated as follows:
\[
\lambda (s,\tilde{D})= \langle \ww^s, \tilde{D}\rangle = \sum_{i,j\neq
  n+1} \ww^s_{ij} {D}_{i,j} + \sum_{i=1}^n 
\ww^s_{i,n+1} \tilde{D}_{i,n+1}.
\]
Now we simply substitute in the definition of the dissimilarity map $D$:
\[
D_{i,j} = \sum_{e \in t(i \tofrom j)} {\bf b}_e,
\]
where $e\in t(i\tofrom j)$ indicates that edge $e\in E(t)$ lies in the path between leaves $i$ and $j$ in tree $t$.  
Exchanging order of summation and regrouping,
\begin{equation}
  \langle  \ww^s, \tilde{D}\rangle = 
  \sum_{e\in E(t)} 
    \bigg(\sum_{\substack{i,j\neq n+1\\ e \in t(i\tofrom j)}}
    \ww^s_{ij}\bigg) \, {\bf b}_e
    + \sum_{i=1}^n  \ww^s_{i,n+1}  \tilde{D}_{i,n+1}
  \label{eq:7}
\end{equation}
which is again a simple inner product with a rational vector.
For a tree $s$ on $n+1$ taxa, define $(\vvv^s)_{\cdot}\in \RR^{3n-3}$ by
\begin{equation}
  \begin{cases}
    (\vvv^s)_e  = \hbox{\!} \sum\limits_{\substack{i,j\neq n+1\\ e \in t(i\tofrom j)}} \ww^s_{ij}, & e \hbox{ edge of lower tree} \\
    (\vvv^s)_i  = \hbox{ } \ww^s_{i,n+1}, & 1 \le i \le n.
  \end{cases}\label{eq:5}  
\end{equation}
Note that this definition depends on the fixed tree $t$, but we do not incorporate it to the notation, as we will typically be fixing a lower tree.

To find the BME tree for a tree metric $(t,\{{\bf b}_e\}_{e\in E(t)})$, we build a vector $\vvv^s \in \RR^{3n-3}$ for each tree $s\in \cT_{n+1}$.
Each vector has entries indexed by the $2n-3$ edges of $t$ and the $n$
distances $\{\tilde{D}_{i,n+1}\colon i=1, \ldots, n\}$.
Our goal is to find $s$ minimizing the quantity~\eqref{eq:7}.
As in the case of the BME problem, we build a polytope $\bb^t$ (here in $(3n-3)$-space) which is the convex hull of the points $\vvv^s$ and study its properties. 

\begin{definition}
Fix a tree $t$ on $n$ taxa and consider the points $(\vvv^s)_{e,i}$ as in~\eqref{eq:5}. 
The convex hull of these points  is called the  {``reduced BME polytope''}, and we denote it by $\bb^t$. 
It only depends on the combinatorial type of the tree $t$ and it is symmetric with respect of the group of symmetries of the tree $t$.
  The points $\{\vvv^s\colon s\in
  \cT_{n+1}\}$ are called ``reduced weights.''
  The \emph{inner} normal fan of $\bb^t$ is called the ``reduced fan.''
  Cones in  this fan are called {``reduced cones''} and their intersections with the positive orthant are be called ``positive reduced cones.''
\end{definition}

From the previous construction it is clear that the BME polytope and the reduced BME polytope are closely related. 
We now explain this connection.  
The linear map $\alpha_t \colon \RR^{\binom{n+1}{2}} \to \RR^{3n-3}$ assigning the reduced weight $\vvv^s$ to  the BME weight $\ww^s$ sends the polytope $\pp$ surjectively onto  the polytope $\bb^t$. 
That is, the reduced polytope is a linear \emph{projection} of the BME polytope. 
On the dual side, the dual of the linear map $\alpha_t$ will inject the dual space of the polytope $\bb^t$ into the dual space of the polytope $\pp$, and in this case the linear spaces of both polytopes are identified by the map $\alpha_t$ (Proposition~\ref{pr:EqnB^t}).
We refer the interested reader to \citep[Section
7.2,][]{zieglerLectConvexPoly95} for more information about
projections of polytopes.

\begin{example}\label{ex:4to5Reduced}
  We illustrate the previous construction in the case of liftings of the quartet tree $t=((1,2), (3,4))$, describing the reduced weights $\vvv^s$ for six trivalent trees $s$ in Table~\ref{tab:4to5taxa}. 
The remaining reduced weights can be obtained by relabelings of $s$ that respect the combinatorial type of $t$.
The table is organized as follows.
 The first five columns encode the branch lengths of the lower tree: ${\bf b}_0$ for the
internal edge of $t$, and ${\bf b}_i$ for the edge pendant to taxon $i$.
The rest, $x_1$ through $x_4$ are the four distances to the new taxon.
 The polytope $\bb^{((1,2),(3,4))}\subset \RR^9$ is four-dimensional, 
 has 14 vertices and $f$-vector $(14, 46, 52, 20)$. 
The vertices of $\pp_5$ corresponding to the trees $((1,3), (5,
(2,4)))$ and $((1,4), (5, (2, 3)))$ project to the same vertex of
$\bb^t$.
Among all 14 vertices, only 5 correspond to upper BME trees: the reduced weight corresponding to the tree $s=((2,5),(3, (1, 4)))$ and its five relabelings that fix $t$.
 The affine hull of $\bb^t$ has five defining linear equations $x_1+x_2+x_3+x_4=1$ and ${\bf b}_{i}+x_i$ for $i=1,2,3,4$. 
 Analogous equations will define the affine hull for all reduced BME polytopes, as we show in Proposition~\ref{pr:EqnB^t}.
\end{example}

  \begin{table}[h]
    \centering
   \begin{tabular}{|c||c|c|c|c|c||c|c|c|c|}
\hline     upper tree & ${\bf b}_{1}$ & ${\bf b}_{2}$ & ${\bf b}_{3}$ & ${\bf b}_{4}$ & ${\bf b}_{0}$ & $x_1$ & $x_2$ & $x_3$ & $x_4$\\
     \hline 
     $((1,2),(3, (4,5)))$ & 7/8 & 7/8 & 6/8& 4/8 & 6/8 &   1/8& 1/8 & 2/8 & 4/8\\
\hline
     $((1,2),(5, (3,4)))$ & 6/8 & 6/8 & 6/8& 6/8 & 4/8 &    2/8& 2/8 & 2/8 & 2/8\\
\hline
$((1,3),(2, (4,5)))$ & 7/8 & 6/8 & 7/8& 4/8 & 9/8 &    1/8& 2/8 & 1/8 & 4/8\\
\hline
     $((1,3),(5, (2,4)))$ & 6/8 & 6/8 & 6/8& 6/8 & 10/8 &    2/8& 2/8 & 2/8 & 2/8\\
\hline
     $((1,3),(4, (2,5)))$ & 7/8 & 4/8 & 7/8& 6/8 & 9/8 &    1/8& 4/8 & 1/8 & 2/8\\
\hline
     $((1,5),(2, (3,4)))$ & 4/8 & 6/8 & 7/8& 7/8 & 6/8 &    4/8& 2/8 & 1/8 & 1/8\\
\hline
        \end{tabular}

   \caption{
   Reduced weights for trivalent trees on five taxa, starting from the lower tree $t=((1,2), (3,4))$, up to symmetry of the lower tree $t$.  
   The column labels show the quantity for which the entry is the
   corresponding coefficient in the reduced weight vector: e.g. the first entry of the table shows that $7/8$ is the coefficient of ${\bf b}_{1}$ for topology $((1,2),(3,(4,5)))$.
   }
   \label{tab:4to5taxa}
 \end{table}

One can compute the dimension, number of vertices, and $f$-vector of the reduced polytope $\bb^t$ as we did in the case of the BME polytope. 
We can also study the behavior of the vertices of the BME polytope
under the projection map, and see how many of its vertices collapse to
a single vertex in $\bb^t$, how many lie in the interior and how many
lie in proper faces of positive dimension. 
We now show that the reduced polytope has dimension $2n-4$ by characterizing its affine hull.
First we state a technical lemma.
 Questions involving vertices and their behavior under the projection map will be deferred to the next section.
\begin{lemma}\label{lm:B^tEquations}
  Given a tree $t$ on $n$ taxa, let $\ww$ denote the BME weight for $t$.
  Then
  \[ 
\sum_{j\neq i} \ww_{ij}=1 \qquad \forall \ 1\leq i\leq n.
  \]
\end{lemma}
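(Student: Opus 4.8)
The plan is to read the BME weight $\ww_{ij}$ as the termination probability of a random walk on $t$ started at leaf $i$, so that the identity $\sum_{j\neq i}\ww_{ij}=1$ becomes the statement that the walk stops at \emph{some} leaf with probability one. Fix the leaf $i$. Since $t$ is a tree, for each leaf $j\neq i$ there is a unique path $i=u_0,u_1,\ldots,u_m=j$ whose set of internal vertices is exactly $p_{ij}^t=\{u_1,\ldots,u_{m-1}\}$. I would define a walk that starts at $i$, moves to its unique neighbor $u_1$, and then at each internal vertex $u_\ell$ chooses uniformly among the $\deg(u_\ell)-1$ incident edges other than the one just traversed, halting as soon as it reaches a leaf. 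The probability that this walk realizes the specific path to $j$ is $\prod_{\ell=1}^{m-1}(\deg(u_\ell)-1)^{-1}$, which is precisely $\ww_{ij}$ by the definition of the weight.

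Next I would verify that the walk terminates at a leaf with probability one and that distinct terminal leaves correspond to disjoint events. Because $t$ is acyclic, at every internal vertex exactly one incident edge points back toward $i$ and all the others point strictly away; since the walk never re-selects its incoming edge, the graph distance from $i$ strictly increases at each step. Hence no vertex is revisited and, $t$ being finite, the walk reaches a leaf after finitely many steps. Each realization ends at exactly one leaf $j\neq i$, so the events ``terminate at $j$'' partition the sample space, and summing their probabilities yields $\sum_{j\neq i}\ww_{ij}=1$.

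The only points requiring care—and the main, rather mild, obstacle—are the degree bookkeeping and the degenerate small cases. For the uniform choice to be well defined I need $\deg(u_\ell)-1\geq 1$ at each internal vertex; this holds because degree-two vertices are suppressed, so internal vertices are at least trivalent. When $n=2$ the path $i\to j$ has no internal vertices and $\ww_{ij}=1$ as an empty product, which matches the single-term sum. With these checks the probabilistic argument is complete; if a purely combinatorial proof is preferred, an essentially equivalent induction on the number of internal vertices—splitting the sum according to which subtree hanging off the neighbor of $i$ contains $j$, with the factor $(\deg-1)^{-1}$ at that neighbor accounting for the branching—would give the same conclusion.
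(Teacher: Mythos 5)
Your proof is correct and is essentially the paper's own argument: the paper proves this lemma with the single sentence that $\ww_{ij}$ is the probability that a non-backtracking random walk started at leaf $i$ ends at leaf $j$, which is exactly the interpretation you develop. Your write-up simply supplies the details (well-definedness of the uniform choice, termination, and disjointness of the terminal events) that the paper leaves implicit.
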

\begin{proof}
If a non-backtracking random walk starts at $i$, then $w_{ij}$ is the probability of that walk ending at $j$.
\end{proof}

\begin{proposition}\label{pr:EqnB^t}The affine hull of $\mathcal{B}^t
  $ is characterized by $n+1$ linearly independent linear
  equations.
 More precisely, they are
 given by $Ax=\bf{1}\in \RR^{n+1}$, where
\[
A:=\left(
  \begin{array}{c|c|c}
I_n & \bf{0}& I_n\\
\hline
    \bf{0} & \bf{0} & \bf{1}
  \end{array}
\right)\in \ZZ^{(n+1)\times(3n-3)},
\]
and the columns of $A$ and points in $\RR^{3n-3}$ are labeled by
partitioning the coordinates as $({\bf b}_{e_1}, \ldots, {\bf b}_{e_n}\,\mid \, {\bf b}_e \colon
e\text{ interior edges of }t\, \mid \,\tilde{D}_{1,n+1}, \ldots,
\tilde{D}_{n,n+1})$.
 Here, $e_i$ denotes the edge pendant to the leaf $i$
in tree $t$.
 In particular, $\dim \mathcal{B}^t=2n-4$, and the $(n+1)$-dimensional lineality space of the reduced fan coincides with the row span of $A$. 
\end{proposition}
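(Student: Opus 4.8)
The plan is to establish the three assertions in turn---that the listed $n+1$ equations hold on $\bb^t$, that they are linearly independent, and that they cut out the \emph{entire} affine hull---and then to read off the dimension and lineality statements as formal consequences.

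First I would check that every reduced weight $\vvv^s$ satisfies $Ax=\mathbf 1$, which gives the inclusion $\mathrm{aff}(\bb^t)\subseteq\{x\colon Ax=\mathbf 1\}$. The bottom row asserts $\sum_{i=1}^n \tilde D_{i,n+1}=1$; since $(\vvv^s)_i=\ww^s_{i,n+1}$ by \eqref{eq:5}, this is exactly Lemma~\ref{lm:B^tEquations} applied to the tree $s$ at its leaf $n+1$. For the $i$-th top row, $\mathbf b_{e_i}+\tilde D_{i,n+1}=1$, the key observation is that the pendant edge $e_i$ lies on the path $t(p\tofrom q)$ for $p,q\le n$ precisely when $i\in\{p,q\}$; hence $(\vvv^s)_{e_i}=\sum_{q\le n,\,q\neq i}\ww^s_{iq}$, and adding $(\vvv^s)_i=\ww^s_{i,n+1}$ produces $\sum_{q\neq i}\ww^s_{iq}=1$, again by Lemma~\ref{lm:B^tEquations} (now at leaf $i$ of $s$). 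Independence of the rows of $A$, i.e.\ $\mathrm{rk}\,A=n+1$, is immediate from the block form: any combination of the top rows has equal first- and last-block parts $(c,\mathbf 0,c)$, so it can never reproduce the bottom row $(\mathbf 0,\mathbf 0,\mathbf 1)$, while the top rows are independent among themselves because of the leading $I_n$.

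The substantive step is to show that there are no further equations, equivalently that the space of linear functionals constant on $\bb^t$ has dimension exactly $n+1$. Here I would pass to the projection $\alpha_t$, with $\vvv^s=\alpha_t(\ww^s)$: a functional $c\in\RR^{3n-3}$ is constant on $\bb^t=\alpha_t(\PP_{n+1})$ if and only if $\alpha_t^{T}c$ is constant on $\PP_{n+1}$, i.e.\ if and only if $\alpha_t^{T}c$ lies in the span of the $n+1$ shift vectors of the $(n+1)$-taxon BME polytope (the orthogonal complement of its affine span, as recorded earlier). A short transpose computation identifies $\alpha_t^{T}c$ as the dissimilarity map on $[n+1]$ whose restriction to $[n]$ is the (signed) tree metric on $t$ with branch lengths $c_e$ and whose distances to $n+1$ are the $c_i$. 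Membership in the shift span means this map has the additive form $\lambda_p+\lambda_q$ on all pairs.

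The crux---and the step I expect to be the main obstacle---is the following metric fact: a signed tree metric on a trivalent $t$ has the additive form $\lambda_p+\lambda_q$ on $[n]$ if and only if every \emph{interior} edge length $c_e$ vanishes. This is the four-point condition \citep[Theorem 2.36]{ASCB}: on each quartet the three cross-sums of an additive metric coincide, forcing every internal quartet edge, hence every interior edge of $t$, to be zero. Granting this, additivity forces $\lambda_i=c_{e_i}$ for $i\le n$, and the pairs through $n+1$ give $c_i=c_{e_i}+\lambda_{n+1}$ with $\lambda_{n+1}$ the single remaining degree of freedom; thus the solution space is parametrized by $(c_{e_1},\dots,c_{e_n},\lambda_{n+1})$ and has dimension $n+1$. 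Since the rows of $A$ already lie in this space and already span an $(n+1)$-dimensional subspace, the two coincide. This yields $\dim\bb^t=(3n-3)-(n+1)=2n-4$ and $\mathrm{aff}(\bb^t)=\{x\colon Ax=\mathbf 1\}$. Finally, the lineality space of a normal fan is by definition the space of functionals constant on the polytope, which we have just identified with the row span of $A$; this gives the last assertion, and because $\alpha_t$ is surjective the injective $\alpha_t^{T}$ carries this row span isomorphically onto the shift span, realizing the promised identification of the two lineality spaces.
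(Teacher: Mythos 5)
Your proof is correct, and its skeleton matches the paper's: verify $Ax=\mathbf 1$ on every reduced weight via Lemma~\ref{lm:B^tEquations}, get the codimension bound from $\mathrm{rk}\,A=n+1$, and settle the substantive point---that there are no further equations---by transporting constancy on $\bb^t$ to constancy on $\PP_{n+1}$, whose lineality space is the span of the $n+1$ shift vectors. Where you genuinely differ is in how that transport is finished. The paper takes $p$ in the reduced lineality space, lifts it to the unique $\psi$-preimage lying in the span of liftings of tree metrics with topology $t$ (this lift is exactly your $\alpha_t^{T}c$), concludes the lift lies in the shift span, and then simply applies $\psi$, using that $\psi(\shift_a)$ are precisely the rows of $A$; no metric geometry is needed. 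You instead compute the preimage $(\alpha_t^{T})^{-1}\bigl(\mathrm{span}\{\shift_a\}\bigr)$ by hand: the four-point/cross-sum identities (which, being linear in branch lengths, do extend to signed lengths as your argument requires) force all interior-edge coordinates of $c$ to vanish, and solving the residual additive system (which needs $n\ge 3$, harmless here) exhibits an $(n+1)$-dimensional solution space containing the rows of $A$. Both routes are valid; yours is more computational and invokes an extra ingredient that the paper's avoids. In fact your closing observation---that $\alpha_t^{T}$ is injective and carries the rows of $A$ to the shift vectors, hence the row span isomorphically onto the shift span---already completes the proof on its own, making the four-point detour unnecessary; that shortcut is essentially the paper's argument read in the opposite direction.
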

\begin{proof}
First, we rewrite the equations in terms of the coordinates of reduced weights then apply Lemma~\ref{lm:B^tEquations}.
 Fix an upper tree $s$ and write
$\vvv$ and $\omega$  for $\vvv^s$ and $\omega^s$ respectively.
The following equalities hold:
\begin{equation*}
  \begin{aligned}
    \sum_{j=1}^n \vvv_{j} &= \sum_{j\neq n+1} \ww_{i\,n+1}=1 \\
    \vvv_{e_i} + \vvv_{i} &= \sum_{j\neq i}
    \ww_{ij}=1 \qquad \forall \ 1\leq i\leq n.
  \end{aligned}
\end{equation*}
These are precisely the linear equations described by matrix $A$. 

We now prove that these equations characterize the space.
 To simplify notation, let $\psi$ be the {surjective} map $\psi(p)=
(\pi_{t}(p_{|_{[n]}}), p_{1, n+1}, \ldots, p_{n, n+1})$ for any
lifting $p$ of a tree metric with tree $t$.
  We proceed by dimensionality arguments.
  We know that $rk(A)=n+1$, so $\dim \cB^t\leq 3n-3-(n+1)=2n-4$.
 Our goal is to show that equality holds. 
 It will suffice to show that the dimension of the lineality space of the ``reduced
fan'' equals $n+1$.

By construction, the shift vectors $\{\shift_{a}\colon 1\leq a \leq n+1\}$
represent tree metrics associated to a degeneration of the trivalent
tree $t$ with two nodes and one edge: a leaf labeled $a$ and the other leaf
labeled by the set $\{1, \ldots, \widehat{a}, \ldots, n+1\}$.
 Hence, these tree
metrics can be expressed as points $\tilde{\shift}_a=\psi(\shift_a)$ in
$\RR^{3n-3}$ and they generate an $(n+1)$-dimensional vector space.
 These
points are precisely the rows of $A$ as described in the statement.
 Hence, it suffices to show that these vectors span the lineality
space of the ``reduced fan''.

Fix any trivalent tree $s_0$ on $n+1$ taxa. Given $p\in \RR^{3n-3}$ in
the lineality space of the reduced fan, by definition we have $\langle p,
\vvv^s\rangle=\langle p, \vvv^{s_0}\rangle$ for all trees $s$.
 By construction, $p$ lies in the image of $\psi$, so fix $q$ with
$p=\psi(q)
$.
 Thus, $  \langle {q}, \ww^s\rangle= \langle p, \vvv^s\rangle$ for all
$s$ by~\eqref{eq:7} and so $\langle {q}, \ww^s\rangle =\langle {q},
\ww^{s_0}\rangle$ for all $s$.
 By definition, we have that ${q}$ is in
the lineality space of the BME fan and so it is a linear combination
of the shift vectors.
 After applying the map $\psi$, the same holds
for $p$ and the vectors $\tilde{\shift}_a$, and the result follows.
\end{proof}

\subsection{Analysis of the reduced BME polytope}
\label{sec:analysis-reduced-bme}
In this section we focus on combinatorial properties of the reduced BME polytope and the behavior of the vertices of the BME polytope under the projection map $\alpha_t$, as $t$ varies along the set of combinatorial types of trees on $n$ taxa. 
In particular, we give a complete description of the vertices for up to six 
taxa (see Table~\ref{tb:ReducedUpTo8}). 
As we mentioned earlier, two tree topologies on $n+1$ taxa can give the same vertex in the polytope $\bb^t$ and vertices of the BME polytope can map to interior points in $\bb^t$ under the projection map.
As Example~\ref{ex:4to5Reduced} shows,  for four
taxa there exists a pair of tree topologies with the same associated
reduced weight, but all fourteen reduced weights are still vertices of
$\bb^t_4$.
  Similarly, in the case of five taxa, a \polymake\
computation shows that all 94 possible (out of 105) reduced weights $\{\vvv^s\colon s\in
\cT_6\}$ are vertices.
  This is no longer true for six taxa.

By construction, the polytope $\bb^t$ encodes an optimization problem where we restrict
our ambient space $\RR^{\binom{n+1}{2}}$ to the space of extensions of tree metrics with associated tree $t$. 
In terms of the BME fan, this means cutting out the fan with the $(2n-3)$-dimensional cone $\RR_+\cT_t\subset \RR^{\binom{n+1}{2}}$.
Note that by intersecting the BME chambers with this cone, we may get a cone with dimension less than $2n-3$.
Moreover, it could very well happen that this intersection is just the lineality
space $\RR(\alpha_t(\shift_a)\colon 1\leq a\leq n+1)$ of the cone.
This would imply that the point $\vvv^s$ lies in the interior of the polytope.
This is indeed what happens for six taxa, as we have found through computation: 
\begin{proposition}\label{pr:reducedForn6}
    Let $t=((1,2),(3,4),(5,6))$ be the snowflake tree. 
Then the  reduced polytope $\bb^t_6$ is generated by the $792$ reduced weights (out of the possible $945$ reduced trivalent points) and    it has $780$ vertices and $83\,227$ facets.
 The remaining twelve  reduced trivalent weights $\vvv^s$ that are not vertices of $\bb^t_6$  lie in the \emph{interior} of the polytope.
 They are associated to   pairs of trivalent trees with topologies:
\begin{verbatim}
(1,((((2,3),(4,6)),7),5)) (1,((((2,4),(3,6)),7),5))
(1,((((2,3),7),(4,6)),5)) (1,((((2,3),7),(4,5)),6))
(1,(((2,3),((4,6),7)),5)) (1,(((2,5),((4,6),7)),3))
(1,((((2,5),(3,6)),7),4)) (1,((((2,6),(3,5)),7),4))
(1,((((2,5),7),(3,6)),4)) (1,((((2,5),7),(4,6)),3))
(1,(((2,5),((3,6),7)),4)) (1,(((2,4),((3,6),7)),5))
(1,((((2,6),7),(3,5)),4)) (1,((((2,6),7),(4,5)),3))
(1,(((2,6),((3,5),7)),4)) (1,(((2,4),((3,5),7)),6))
(1,((((2,3),(4,5)),7),6)) (1,((((2,4),(3,5)),7),6))
(1,(((2,3),((4,5),7)),6)) (1,(((2,6),((4,5),7)),3))
(1,((((2,4),7),(3,6)),5)) (1,((((2,4),7),(3,5)),6))
(1,((((2,5),(4,6)),7),3)) (1,((((2,6),(4,5)),7),3))
\end{verbatim}
 Similarly if $t$ is the lower tree $(1,(((3,4),6),5),2)$ (the caterpillar tree), then the polytope $\bb_6^t$ has $804$ distinct reduced weights, $800$ vertices and $116\,701$ facets.
 In  this case, all four reduced trivalent weights $\vvv^s$ that are not vertices of $\bb^t_6$ lie in the interior.
 In this case, each point corresponds to a single topology and they are:
\begin{verbatim}
(1,((((2,(3,5)),7),4),6))
(1,((((2,6),3),7),(4,5)))
(1,((((2,(4,5)),7),3),6))
(1,((((2,6),4),7),(3,5)))
\end{verbatim}
\end{proposition}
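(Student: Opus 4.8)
Since there are exactly two combinatorial types of trivalent tree on six leaves---the snowflake $((1,2),(3,4),(5,6))$ and the caterpillar $(1,(((3,4),6),5),2)$---the two trees in the statement exhaust all possible lower trees, and the proposition is a complete classification for $n=6$. Accordingly the proof is computational, and the plan is to set up the computation so that its output is exactly verifiable, with the genuine mathematical content residing in the correct classification of the non-vertex reduced weights. I would treat the two lower trees separately but by the same recipe.

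First I would enumerate all $(2\cdot 7-5)!!=945$ trivalent trees $s$ on $n+1=7$ taxa and, for each, compute the BME weights $\ww^s_{ij}$ and then the reduced weight $\vvv^s\in\RR^{3n-3}=\RR^{15}$ via formula~\eqref{eq:5}; for a fixed lower tree $t$ this only requires recording which edges of $t$ lie on each path $t(i\tofrom j)$. Collecting these $945$ rational points and collapsing the coincident ones produces the multiset of distinct reduced weights (the $792$, respectively $804$, of the statement); I would retain the full fiber of topologies over each distinct weight for later bookkeeping. By Proposition~\ref{pr:EqnB^t} every reduced weight satisfies $Ax=\mathbf{1}$, so $\bb^t_6$ lies in the $8$-dimensional affine subspace cut out by these $n+1=7$ equations. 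I would pass to coordinates on the row span of $A$ (equivalently, quotient by the lineality space of the reduced fan) so that the subsequent convex-hull computation is genuinely full-dimensional.

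Next I would feed the distinct reduced weights to \polymake\ to obtain the vertices and the facet description of $\bb^t_6$, reading off the vertex counts ($780$ and $800$) and facet counts ($83\,227$ and $116\,701$). The delicate step is the classification of the non-vertex reduced weights: a priori such a point could lie on a proper face of positive dimension rather than in the interior. To certify that each one lies in the \emph{interior}, I would verify that it satisfies every facet inequality of $\bb^t_6$ \emph{strictly} (working within the affine hull $Ax=\mathbf{1}$, where strict satisfaction of all facet inequalities is equivalent to membership in the relative interior); in the language of the reduced fan this is the statement that its reduced cone collapses to the lineality space, i.e.\ to the row span of $A$. Running this test isolates exactly the $12$ interior weights in the snowflake case and the $4$ in the caterpillar case, and reading off the retained fibers records which topologies are responsible for each---pairs of trees for the snowflake, single trees for the caterpillar---reproducing the lists in the statement.

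The main obstacle is the scale combined with the need for exact arithmetic: $\bb^t_6$ is an $8$-dimensional polytope with tens of thousands of facets built from roughly $800$ rational points, and the interior certificate must be a strict rational inequality against every facet rather than a floating-point check, or the claim of \emph{interiority} (as opposed to lying on a low-dimensional face) is not rigorous. A secondary care point is the bookkeeping forced by the many-to-one collapse of topologies onto reduced weights: one must track the entire preimage of each weight to state correctly which trees give the interior points, since the distinction between the snowflake (pairs) and caterpillar (singletons) cases lives precisely in these fibers.
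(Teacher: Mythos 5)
Your proposal is correct and matches the paper's approach: the proposition is established in the paper purely by computation (using \polymake, \gfan, and custom code), namely enumerating the $945$ reduced weights, taking the convex hull in the $8$-dimensional affine hull, and identifying which non-vertex weights are interior. Your additional care about exact arithmetic and certifying relative interiority via strict facet inequalities is precisely the rigor such a computation requires, so nothing essential is missing.
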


From the previous examples, we see that in the case of four and five
taxa, all reduced points are vertices.
 And for six taxa, reduced
points are either vertices or \emph{interior} points (Proposition~\ref{pr:reducedForn6}).
 Thus, it is natural to ask if these are the only two possibilities:
\begin{question}
  For $n\geq 7$ and any tree $t\in \cT_n$, are all reduced trivalent points  either vertices or interior points of the  reduced polytope $\bb^t$?
\end{question}
\noindent We expect the answer to be positive, provided the projection map $\alpha_t$ is generic.

\medskip

We now switch gears and focus on the number of upper BME trees we can obtain from a lifting of a given tree metric with topology $t$. This study will highlight the behavior of ``rogue taxa.''
Equivalently, we want to know how many positive reduced cones $\cc^+_{s}(\bb^t)$ ($s$  trivalent tree on $n+1$ taxa) are non-empty.
We provide a complete answer for up to six taxa in Table~\ref{tb:ReducedUpTo8} below.

The next natural question to ask is what are the asymptotics (or provide an upper bound) of the number of such non-empty positive reduced cones.
As a first attempt, we give some insight about which topologies can be ruled out for upper BME trees.
 In other words, which are the \emph{blocking} topologies for upper trees.
\begin{definition}
Fix $t\!\in\!\cT_n$ and let  $\vvv^s$ be the reduced weight for a trivalent tree $s\!\in\! \cT_{n+1}$.
 We define a \emph{partial order} on the set $\{\vvv^s\colon s\in \cT_{n+1}\}$ as follows: $\vvv^s\succ \vvv^{s'}$ if and only if $(\vvv^s)_l \leq (\vvv^{s'})_l$ for all $1\leq l \leq 3n-3$.
 We say $s$ \emph{blocks} $s'$ if $\vvv^s\succ \vvv^{s'}$.
\end{definition}

\begin{lemma}
  Let $t\in \cT_n$, and $s,s'\in \cT_{n+1}$ be such that $s$ blocks
  $s'$.
 Then, $s'$ cannot be a BME tree for any lifting $\tilde{D}$ of
  $D\in \cc_t^+$. 
\end{lemma}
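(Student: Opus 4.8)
The plan is to show that if $s$ blocks $s'$, then for every admissible lifting the BME length of $s$ is at most that of $s'$, which prevents $s'$ from being the (unique) minimizer, hence from being the BME tree. The key observation is that the blocking relation $\vvv^s \succ \vvv^{s'}$ is exactly the componentwise inequality $(\vvv^s)_l \leq (\vvv^{s'})_l$ for all coordinates $l$, and that the relevant objective function is the inner product of the reduced weight vector with a \emph{nonnegative} data vector.

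First I would recall the reduced formulation. By equation~\eqref{eq:7} and the definition~\eqref{eq:5} of the reduced weights, for a tree metric $D\in\cT_t$ with branch lengths $\{\mathbf{b}_e\}$ and any lifting $\tilde D$, the BME length of an upper tree $s$ equals
\[
\lambda(s,\tilde D)=\langle \vvv^s, \xi\rangle, \qquad \text{where } \xi=(\mathbf{b}_e \colon e\in E(t);\ \tilde{D}_{1,n+1},\dots,\tilde{D}_{n,n+1}).
\]
The point is that the data vector $\xi$ has \emph{all nonnegative entries}: the branch lengths $\mathbf{b}_e$ of a genuine tree metric of type $t$ are positive (indeed $\pi_t(D)\in\RR_+^{2n-3}$ by the property of $\pi_t$), and the distances $\tilde{D}_{i,n+1}$ to the new taxon are nonnegative since $\tilde D$ is a dissimilarity map. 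This is precisely the regime $D\in\cc_t^+$ of the lemma.

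Next I would carry out the one-line comparison. Since $\xi\geq 0$ coordinatewise and $(\vvv^{s'})_l - (\vvv^s)_l \geq 0$ for every $l$ by the blocking hypothesis, we get
\[
\lambda(s',\tilde D)-\lambda(s,\tilde D)=\langle \vvv^{s'}-\vvv^s,\ \xi\rangle=\sum_{l=1}^{3n-3}\bigl((\vvv^{s'})_l-(\vvv^s)_l\bigr)\,\xi_l\ \geq\ 0.
\]
Thus $\lambda(s,\tilde D)\leq\lambda(s',\tilde D)$ for \emph{every} lifting $\tilde D$ of every $D\in\cc_t^+$. Now invoke the degeneration/uniqueness discussion from Section~\ref{sec:polyhedral-geometry}: a trivalent tree $s'$ is a BME tree for $\tilde D$ only if $\tilde D$ lies in the chamber $\cc_{s'}$, and for a lifting landing strictly inside this chamber the minimizer is unique. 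But the inequality above exhibits $s$ as a competitor whose BME length never exceeds that of $s'$, so $s'$ cannot be the strict unique minimizer, and hence $s'$ is not the BME tree. (Because $s\neq s'$ are distinct trivalent trees, their reduced weights differ in at least one coordinate, so the difference vector is nonzero; combined with a generic positive lifting this already forces the inequality to be strict, ruling out $s'$ outright.)

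I do not anticipate a serious obstacle here; the content is genuinely the sign bookkeeping of the previous paragraph together with the nonnegativity of $\xi$. The only point requiring mild care is the logical passage from ``$\lambda(s,\tilde D)\leq\lambda(s',\tilde D)$ always'' to ``$s'$ is never \emph{the} BME tree.'' I would handle this by appealing to the convention, stated earlier, that the BME tree refers to the unique trivalent minimizer when one exists, together with the closure-under-degeneration remark; since $s$ weakly dominates $s'$ on the entire cone $\cc_t^+$ of admissible liftings, $s'$ can never be selected as the strict optimum, which is exactly the claim.
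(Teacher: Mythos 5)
Your proposal is correct and is essentially the paper's own proof, fleshed out: the paper's argument is exactly the one-line observation that componentwise domination of the reduced weights, paired with nonnegativity of the branch lengths and of the distances to the new taxon, yields $\lambda(s,\tilde D)\le\lambda(s',\tilde D)$ for every lifting $\tilde D$, which is your main computation (you merely spell out the reduced inner-product identity that the paper leaves implicit).

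One parenthetical claim you make is false, however: distinct trivalent trees on $n+1$ taxa can have \emph{equal} reduced weights, so $s\neq s'$ does not imply $\vvv^{s}\neq\vvv^{s'}$, and strictness cannot be forced that way. The paper itself records such collisions: for $t=((1,2),(3,4))$ the trees $((1,3),(5,(2,4)))$ and $((1,4),(5,(2,3)))$ share a reduced weight (Example~\ref{ex:4to5Reduced}), and further such pairs occur for six taxa (Proposition~\ref{pr:reducedForn6}). This does not damage your core argument, which --- like the paper's proof --- rests on the weak inequality and on the stated convention distinguishing ``the BME tree'' from ``a BME tree'' in the presence of ties; but the strictness parenthetical should be dropped rather than relied upon.
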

\begin{proof}
  It suffices to show that for any $\tilde{D}$, $\lambda(s,\tilde{D})  \leq \lambda(s',\tilde{D})$, and this follows because $\tilde{D}$ has non-negative entries.
\end{proof}
We illustrate with examples on five taxa.

\begin{example}
  Let $t=(1,((3,4),5),2)$.
 Out of all possible 94 vertices in $\bb^t$, there are 19 reduced vertices that are blocked by other vertices, out of 20 empty positive reduced cones.
 The blocking  relation is described in Figure~\ref{fig:domination6} and it gives 26 blocking upper tree topologies.
 We simplify  the picture by reducing the relation modulo relabeling of all
  leaves involved in each chain and that fix the lower tree $t$.
  
\begin{figure}[ht]
    \centering
    \includegraphics[width=10cm]{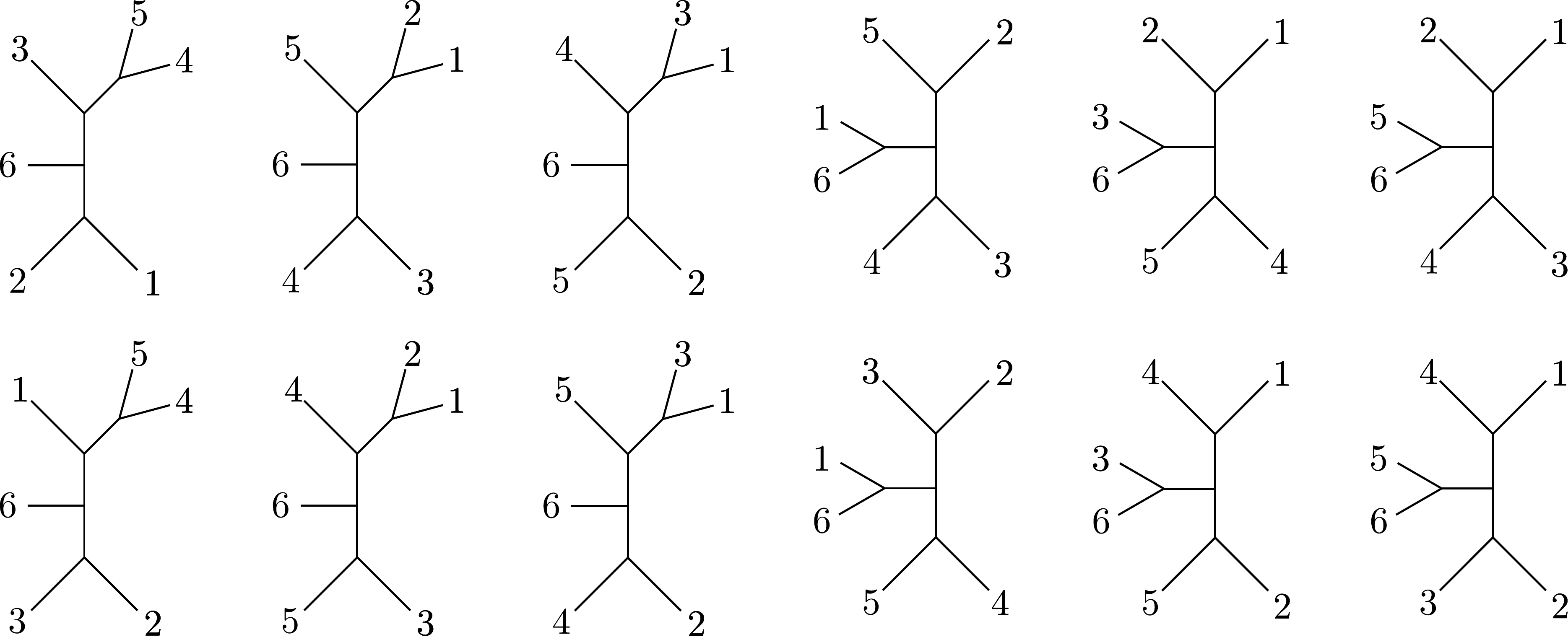}
    \caption{
    The blocking relations (up to symmetry) for trees on six taxa. 
    Pairs of trees in a column are a single blocking relation, with the  tree in the second row blocking the corresponding tree in the first row.
    Note that these blocking relations do not come from Theorem~\ref{thm:NeverSx}.
    }
    \label{fig:domination6}
  \end{figure}
In particular we see that out of the 94 possible BME reduced vertices
for $t$, we can rule out 19 of these vertices for upper trees by ``blocking'' relations.
\end{example}

Unfortunately, this partial order set is not a sufficient criterion to
determine if a tree on $n+1$ taxa can be an upper tree or not.
 In particular, it cannot explain the obstruction to exchange subtrees ``over'' the new pendant edge (Theorem~\ref{thm:NeverSx}), except in the case of quartet trees. However, understanding the blocking relation can give an upper bound for the asymptotics of the upper BME trees.

We end this section with a table discribing the relation between the
BME and reduced BME polytopes for up to six 
taxa.
In the case of six 
taxa, we have two combinatorial types of lower trees and each one will label a row in our table.
The row starting with ``6a'' indicates the caterpillar tree on six taxa, whereas ``6b'' refers to the snowflake tree (see Proposition~\ref{pr:reducedForn6}).
\begin{table}[ht]
  \centering
  \begin{tabular}{|c||c|c|
| c|c||c |c|}\hline
n 
&\multicolumn{2}{|c||}{ {dim.}} 
&  \multicolumn{2}{|c||}{$\#$ vertices} & {$\#$ void upper } & {$f$-vector of reduced BME}\\
\cline{2-5}  
& \text{BME} & \text{red.} &  \text{BME} & \text{red.} & trees for $t$&  positive cones\\
\hline
3 & 2 & 2 & 3 & {3} & 0 & (0,0,3)\\
\hline
4 & 5 & 4 & 15 & {14}& 2& $(1, 0,0,0,13)$\\
\hline
5 & 9 & 6 & 105 & {94} &  20 & $(16, 1, 6, 0, 0, 0, 71)$\\
\hline
6a &  14 & 8 & 945 & 800 & 208 &  $(160, 32, 98, 10, 39, 0, 0, 461)$ \\
6b & 14 & 8 & 945 & 780  & 154 & $(123, 0, 144,9, 39, 0, 0, 0, 465) $ \\
\hline
\end{tabular}
\caption{A comparison between the BME and reduced BME polytopes for up
  to six 
taxa. 
In the case of six 
taxa, we have more than one combinatorial type for the lower tree $t$. Each vector in the last column gives the number of reduced BME positive cones classified by dimension, starting from dimension $n+1$ and up to dimension $3n-3$.
 The lowest dimensional ones correspond to reduced weights of forbidden upper BME trees, since they lie in the linear space spanned by the shift vectors.
The discrepancy between the first entry of these vectors and the entry of the column indicating the number of voided upper trees reflects that several of these void trees have equal reduced weights.
}
\label{tb:ReducedUpTo8}
\end{table}

We conclude with an interesting computationally challenging question:
\begin{question}
What are the \emph{asymptotics}  of the number of vertices of the
$\bb^t$  and of the number of upper BME trees and upper BME reduced trees for different combinatorial types of lower trees $t$?  
\end{question}

\subsection{The rogue taxon effect for four taxa}
\label{sec:rogue-taxon-effect}
The extremal rays of each reduced cone can be interpreted to give
precise information on the rogue taxon effect.
  In this section, we
explore the reduced polyhedral cone associated to the lower tree
$((1,2),(3,4))$ and the upper tree $(((1,5),3),(2,4))$.
  Up to symmetry, this is the only lower/upper combination for this number of taxa such that the new taxon has ``rogue'' behavior.
  By understanding the extremal rays of the polyhedral cone, we establish Propositions~\ref{prop:rogueFunnyBL} and \ref{prop:rogueThreeTimes}.

\begin{table}[ht]
  \centering
  \begin{tabular}{|l||llll|l|llll|}
\hline      & ${\bf b}_1$ & ${\bf b}_2$ & ${\bf b}_3$ & ${\bf b}_4$ & ${\bf b}_0$ & $x_1$ & $x_2$ & $x_3$ & $x_4$ \\
     \hline
     \hline
$c_1$       &  4  &  0  &  3  &  3  &  1  &  0  &  0  &  0  &  0  \\
$c_2$       &  3  &  0  &  3  &  0  &  1  &  0  &  0  &  0  &  0  \\
$c_3$       &  1  &  0  &  0  &  0  &  1  &  0  &  3  &  0  &  0  \\
$c_4$       &  0  &  0  &  0  &  0  &  1  &  0  &  4  &  1  &  1  \\
$c_5$       &  0  &  0  &  0  &  0  &  1  &  0  &  3  &  0  &  3  \\
\hline
$e_1$       &  1  &  1  &  1  &  1  &  0  &  0  &  0  &  0  &  0  \\
$e_2$       &  1  &  1  &  1  &  0  &  0  &  0  &  0  &  0  &  0  \\
$e_3$       &  1  &  0  &  1  &  1  &  0  &  0  &  0  &  0  &  0  \\
$e_4$       &  1  &  0  &  1  &  0  &  0  &  0  &  0  &  0  &  0  \\
$e_5$       &  1  &  0  &  0  &  0  &  0  &  0  &  0  &  0  &  0  \\
$f_1$       &  0  &  0  &  0  &  0  &  0  &  1  &  1  &  1  &  1  \\
$f_2$       &  0  &  0  &  0  &  0  &  0  &  0  &  1  &  1  &  1  \\
$f_3$       &  0  &  0  &  0  &  0  &  0  &  0  &  1  &  0  &  0  \\
$f_4$       &  0  &  0  &  0  &  0  &  0  &  0  &  0  &  0  &  1  \\
$\shift_1$  &  1  &  0  &  0  &  0  &  0  &  1  &  0  &  0  &  0  \\
$\shift_2$  &  0  &  1  &  0  &  0  &  0  &  0  &  1  &  0  &  0  \\
$\shift_3$  &  0  &  0  &  1  &  0  &  0  &  0  &  0  &  1  &  0  \\
$\shift_4$  &  0  &  0  &  0  &  1  &  0  &  0  &  0  &  0  &  1  \\

\hline
  \end{tabular}

  \caption{
  The extremal rays of the polyhedral cone $X_s(t)$ for four lower taxa for $t=((1,2),(3,4))$ and $s=(((1,5),3),(2,4))$.
  The rows represent the rays.
  Labeling conventions for rows and columns are described in the text.
  }
  \label{tab:coneverts4}
\end{table}

Table~\ref{tab:coneverts4} gives the extremal rays of the cone
$X_s(t)$. We follow the notation of Example~\ref{ex:4to5Reduced} to
label the columns.
%
The rows label the extremal rays of the cone, and are divided into sections.
The first section, labeled with $c$, are the rays which give branch length/extra taxon distances with a nontrivial internal branch length for the lower tree. 
This is visible because of the $1$ in the ${\bf b}_0$ column.
These rays are interesting as they represent the ``minimal'' rogue taxon examples.
We analyze these $c_i$ in more detail below.

The second section, labeled with $e$, $f$, and $\shift$, shows how the pendant  (leading to a leaf) branch lengths of the lower tree and the distances to the new taxon can be modified without changing the upper tree.
That is, any positive multiple of these vectors can be added to a point in the
cone while staying in the same polyhedral cone.
For instance, $e_4$ says that we can increase the branch lengths ${\bf b}_1$ and ${\bf b}_3$ simultaneously while maintaining the same upper tree.
The ray $f_3$, for example, (which is all zero except for the $x_2$ column), says that we can increase the distance of the new taxon to the second original taxon without changing the upper tree.
The $\shift_i$ are simply the \emph{shift vectors} corresponding to the pendant branches. 
Thus $\shift_i$ means that we can increase the $i$th pendant branch length while increasing the distance of the new taxon to the $i$th original taxon without changing the upper tree.

These extremal rays can give some sufficient conditions for rogue taxon behavior.
We specify branch lengths of quartets by a vector giving branch lengths in the order $({\bf b}_0,\dots,{\bf b}_4)$.
We say that a vector $\mathbf{x}$ is a rogue vector for a branch length vector $\mathbf{b}$ if the BME tree for the combined data as in Table~\ref{tab:coneverts4} is the tree $(((1,5),3),(2,4))$.
We will call the cone given by positive linear combinations of the set
\[
\{ (0, 1, 1, 1, 1), (0, 1, 1, 1, 0), (0, 1, 0, 1, 1), (0, 1, 0, 1, 0), (0, 1, 0, 0, 0) \}
\]
the \emph{extension cone}.
Any element from this cone can be added to a branch length set without changing the polyhedral cone; this can be seen by looking at the $e_i$ vectors above.

Note that any vector satisfying $0 \le x_1 \le x_3 \le \min(x_2, x_4)$ sits
in the cone generated by the $f_i$ restricted to their last four
coordinates. 
Therefore we conclude:
\begin{proposition}
  Any vector satisfying $0 \le x_1 \le x_3 \le \min(x_2, x_4)$ is a rogue vector for any tree with branch length vector given by either $(1, 4, 0, 3, 3)$ or $(1, 3, 0, 3, 0)$ plus any element of the extension cone.
  \label{prop:rogueFunnyBL}
\end{proposition}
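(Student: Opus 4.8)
The plan is to show that the point of $\RR^9$ described in the statement lies in the polyhedral cone $X_s(t)$ for $t=((1,2),(3,4))$ and $s=(((1,5),3),(2,4))$, whose extremal rays are tabulated in Table~\ref{tab:coneverts4}. By the construction of $X_s(t)$ in Problem~\ref{pb:partUpper} and the surrounding discussion, a point of this cone is exactly a pair (lower branch lengths, distances to the new taxon) for which $s$ is the BME tree; so membership in $X_s(t)$ is precisely the assertion that the distance vector is a rogue vector for the given branch lengths. Thus I would reduce the whole statement to exhibiting the point as a \emph{nonnegative} combination of the rays $c_i$, $e_i$, $f_i$, $\shift_i$ of the table.

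The step that makes this easy is the observation that the generating rays decouple along the two coordinate blocks. In Table~\ref{tab:coneverts4} the rays $c_1,c_2$ and $e_1,\dots,e_5$ have all four distance coordinates $x_1,\dots,x_4$ equal to zero, while $f_1,\dots,f_4$ have all five branch-length coordinates equal to zero. Reading the branch lengths in the order $({\bf b}_0,\dots,{\bf b}_4)$ used in the statement, one checks directly that $c_1$ equals $(1,4,0,3,3)$ and $c_2$ equals $(1,3,0,3,0)$ in the branch block (with zero distance block), and that the five named generators of the extension cone are exactly $e_1,\dots,e_5$ in the branch block. So the branch-length part of the point in the statement, namely $(1,4,0,3,3)$ (or $(1,3,0,3,0)$) plus an element of the extension cone, is already written as $c_1$ (resp.\ $c_2$) plus a nonnegative combination of the $e_i$, and this contributes nothing to the distance coordinates.

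It then remains to realize the distance block, which is precisely the computation recorded immediately before the proposition: on the last four coordinates one sets $(x_1,x_2,x_3,x_4)=x_1\,f_1+(x_3-x_1)\,f_2+(x_2-x_3)\,f_3+(x_4-x_3)\,f_4$, and all four coefficients are nonnegative exactly under the hypothesis $0\le x_1\le x_3\le\min(x_2,x_4)$. Since the $f_j$ contribute nothing to the branch block, superposing this combination onto the branch-length part assembled above presents the full point as a nonnegative combination of $c_1$ (or $c_2$), $e_1,\dots,e_5$, and $f_1,\dots,f_4$. Hence the point lies in $X_s(t)$, and $s$ is its BME tree.

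I expect essentially no genuine obstacle here; the content sits entirely in the ray data of Table~\ref{tab:coneverts4}, which was already computed. The only care needed is bookkeeping: matching the coordinate orderings of the table and of the statement (in particular the placement of ${\bf b}_0$ at the front), and confirming that the branch-length and distance blocks really do decouple so that the two nonnegative combinations superpose without interference --- both immediate from the zero patterns of the rays. A minor caveat I would flag in a remark rather than in the proof is that membership in $X_s(t)$ guarantees only that $s$ is \emph{a} BME tree; to see that it is the unique one for generic data in this region, one checks the point lies in the relative interior of $X_s(t)$, which holds on the open set cut out by the strict versions of the defining inequalities.
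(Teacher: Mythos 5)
Your proof is correct and takes essentially the same approach as the paper: the paper's argument is exactly that $(1,4,0,3,3)$ and $(1,3,0,3,0)$ are the rays $c_1,c_2$ of Table~\ref{tab:coneverts4}, that the extension cone is spanned by $e_1,\dots,e_5$, and that any $(x_1,x_2,x_3,x_4)$ with $0\le x_1\le x_3\le\min(x_2,x_4)$ lies in the cone generated by the $f_i$ restricted to the last four coordinates, so the assembled point lies in $X_s(t)$ and $s=(((1,5),3),(2,4))$ is a BME tree for the lifting. Your explicit combination $x_1f_1+(x_3-x_1)f_2+(x_2-x_3)f_3+(x_4-x_3)f_4$ merely spells out the coefficients the paper leaves implicit, and your closing caveat about ``a'' versus ``the'' BME tree matches the convention the paper adopts in Section~\ref{sec:polyhedral-geometry}.
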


The next proposition gives rogue criteria for a quartet tree with
arbitrary internal branch length.
The proof is simple: just look at $c_5$ in Table~\ref{tab:coneverts4},
which shows that $(0,3,0,3)$ is a rogue vector for the quartet with
trivial pendant branch lengths and internal branch length 1.

\begin{proposition}
  Any quartet tree has a rogue vector with an entry greater than or equal to three times the internal branch length of the lower tree.
  \label{prop:rogueThreeTimes}
\end{proposition}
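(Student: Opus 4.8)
The plan is to lean on the fact, established in the surrounding discussion, that $X_s(t)$ is a \emph{polyhedral cone} for $s=(((1,5),3),(2,4))$ and $t=((1,2),(3,4))$: it is closed under nonnegative combinations of the extremal rays tabulated in Table~\ref{tab:coneverts4}, and every point of it encodes a genuine rogue configuration, since $s$ restricts to $((1,3),(2,4))\neq t$ on the lower taxa. The ray $c_5$ of Table~\ref{tab:coneverts4} already certifies the claim for the degenerate quartet with trivial pendant branches and internal length $1$, as noted above; the remaining work is to upgrade this to an \emph{arbitrary} quartet.

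First I would dispose of the internal branch length by homogeneity: rescaling $c_5$ by a factor ${\bf b}_0>0$ gives a point of $X_s(t)$ whose internal edge has length ${\bf b}_0$, whose pendant edges are trivial, and whose rogue vector is $(0,3{\bf b}_0,0,3{\bf b}_0)$, already realizing entries equal to $3{\bf b}_0$. To install prescribed pendant lengths ${\bf b}_1,\dots,{\bf b}_4$ I would add the shift rays $\shift_1,\dots,\shift_4$, each of which raises one pendant branch ${\bf b}_i$ and the matching distance $x_i$ by equal amounts. The nonnegative combination
\[
P := {\bf b}_0\, c_5 + \sum_{i=1}^{4} {\bf b}_i\, \shift_i
\]
therefore still lies in $X_s(t)$; its branch-length coordinates are exactly $({\bf b}_0,{\bf b}_1,{\bf b}_2,{\bf b}_3,{\bf b}_4)$, matching the given quartet, while its rogue coordinates are $({\bf b}_1,\,3{\bf b}_0+{\bf b}_2,\,{\bf b}_3,\,3{\bf b}_0+{\bf b}_4)$. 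Reading off the second entry yields $x_2=3{\bf b}_0+{\bf b}_2\geq 3{\bf b}_0$, which is the required rogue vector.

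The step I expect to be the genuine obstacle is precisely the installation of the pendant lengths. The tempting choice is the \emph{extension cone} of Proposition~\ref{prop:rogueFunnyBL}, but each of its generators has a strictly positive ${\bf b}_1$-coordinate, so the extension cone cannot reach quartets with ${\bf b}_1=0<{\bf b}_2$; it fails to span the full positive orthant of pendant lengths. The shift rays repair this because they hit the four pendant coordinates independently. Their only side effect is to also increase the corresponding $x_i$, but since the statement requires merely \emph{one} rogue entry to exceed $3{\bf b}_0$ and these contributions are nonnegative, the inequality $x_2\geq 3{\bf b}_0$ is preserved. Hence the construction goes through for every quartet, and no further case analysis is needed.
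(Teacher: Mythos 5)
Your proof is correct and takes essentially the same route as the paper: the paper's own (very terse) proof also rests on the ray $c_5$ of Table~\ref{tab:coneverts4}, leaving the rescaling by ${\bf b}_0$ and the installation of arbitrary pendant lengths implicit. Your explicit use of the shift rays $\shift_1,\dots,\shift_4$ to supply the pendant lengths (correctly noting that the extension cone, whose generators all have ${\bf b}_1>0$, cannot do this) just spells out the step the paper omits, so the two arguments coincide.
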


Although the above propositions do give some conditions on when the
rogue taxon effect appears for four taxa, they do not specify how likely are we to end up in a rogue taxon situation.
They also give no information about trees on larger number of taxa.
In the next section, we gain some intuition about these questions via simulation.


\subsection{Simulations}
\label{sec:simulations}
Here we describe simulations performed to better understand the rogue taxon effect as it might appear in biological data.
These simulations show that, at least for small numbers of taxa, the rogue taxon effect is common when the extra distances are chosen without reference to the original tree.
They also suggest that the effect gets worse as the number of taxa increases.

We assume a random distribution for the branch lengths and distances to the new taxon.
Such simulations are not the only way to address these sorts of questions.
Volume computations of, e.g., spheres intersected with our polyhedral
cones are in principle possible, but they do not seem to admit a
closed form solution.
Thus our understanding of such volumes still depends strongly on Monte Carlo simulations \citep{BMEPolytope}.
Furthermore, such a volume may give less practical information than simulation using a reasonable model of branch lengths.

To better understand the frequency with which the rogue taxon phenomenon can occur, we simulate using the exponential distribution.
Although a simple arbitrary choice, the exponential distribution is realistic enough to be a branch length prior for Bayesian phylogenetic inference \citep{mrBayesManual}.
For a given lower tree, we generate branch lengths for that tree according to the mean one exponential distribution, then generate distances to the extra taxon via the exponential distribution with mean equal to the expected pairwise distance between tips of the tree.
Then, we find the upper tree (i.e.\ the BME tree for the original data set plus the rogue taxon) and check to see how many bipartitions of the upper tree (restricted to the lower taxa) are not contained in the lower tree. 
This number is the Robinson-Foulds distance between the upper and lower trees used in Section~\ref{sec:funny-trees}.

\begin{table}
  \begin{tabular}{c|ccccc}
   \RF & $((1,2),(3,4))$ & $((1,2),3,(4,5))$ & $(((1,2),5),((3,4),6))$ & $((1,2),(3,4),(5,6))$ \\
    \hline
    0  &  0.705071  &  0.502925  &  0.380863   &  0.381869  \\
    1  &  0.294929  &  0.364874  &  0.367523   &  0.363955  \\
    2  &  -         &  0.132201  &  0.195223   &  0.209066  \\
    3  &  -         &  -         &  0.0563907  &  0.04511   \\
  \end{tabular}
  \caption{
  Simulation results for $10^7$ exponentially distributed branch lengths and distances to rogue taxa.
  The columns are labeled by the topology of the lower tree.
  The numbers in the table represent the fraction of time that the corresponding Robinson-Foulds distance between the upper and lower trees appeared via the rogue taxon effect.
  }
  \label{tab:sim}
\end{table}

The results of $10^7$ exponentially drawn branch lengths are shown in Table~\ref{tab:sim}; it shows that a taxon added with random data can substantially alter the structure of the phylogenetic tree. 
Indeed, almost 30\% of the lifted four taxon trees do not contain the original topology, growing to almost 50\% for five taxa, then almost 62\% for the six taxon topologies.

We emphasize that such simulations do not paint an accurate picture of the rogue taxon effect for real data.
Indeed, even the worst data does not have completely random distances: even ``random'' sequence data will not have random distances to the
rest of the tree.
Nevertheless, we believe that these results indicate that this area merits further investigation and that the effective volume of these ``rogue'' polyhedral cones is not small.

In the reduced BME setting it can happen that multiple bifurcating upper trees are associated with a cone of the reduced normal fan for a given lower tree.
That is, the trees all have the same BME length for given lower tree branch lengths and rogue taxon distances.
We have observed in the example presented here that when there are these multiple trees, the Robinson-Foulds distance between the lower tree and these multiple upper trees (restricted to the lower taxa) for a given cone are equal.
It would be interesting to know if this is true in the general case.

The equivalent fact for the quartet distance is not true.
In the case of the lower tree being $(((1,2),5),((3,4),6))$, there is a cone of the reduced normal fan associated with both $(1,((((2,3),(4,7)),6),5))$ and $(1,((((2,6),(4,7)),3),5))$.
Restricting to the lower taxa, these trees are $(1,((((2,3),4),6),5))$ and $(1,((((2,6),4),3),5))$, which have quartet distances 10 and 11, respectively, to the lower tree.


\section{Conclusions and future directions}
\label{sec:conclusions}

We have investigated the effect of adding an extra ``rogue'' taxon into a phylogenetic data set for BME phylogenetic inference.
We have shown that rogue taxa can have significant though not arbitrary effects on the tree.
For a small number of taxa, we can delineate the domain of the rogue taxon effect.
Simulations show that the rogue taxon effect is very significant when the data for the rogue taxon is chosen randomly without reference to the topology of the original tree.

The results presented here may have algorithmic consequences for phylogenetic inference.
It is common for inference programs to start with a tree on three taxa then build a tree by adding taxa sequentially.
Software packages using sequential taxon addition, such as PHYLIP \citep{felsensteinPhylip95} and fastDNAml \citep{olsenFastDNAml94} do optimize the tree after addition using rearrangements; the question of strict sequential addition performance is still important in order to determine the amount of post-addition optimization required.
Furthermore, ``evolutionary placement algorithms'' for large amounts of sequence data have been proposed whereby a ``query'' sequences are inserted into a fixed ``reference tree'' \citep{vonMeringEaQuantitative08, bergerStamatakisEPA09}.
The accuracy of such algorithms compared to traditional phylogenetics algorithms can be seen as an aspect of the rogue taxon problem.

An interesting next direction would be to consider situations where rogue taxa
do not have arbitrary data, but appear via misspecified evolutionary
models.
  This will hopefully give a clearer understanding of the
actual impact of rogue taxa.
  It would also be interesting to see if
some of the results presented here also extend to other inference
criteria, such as parsimony or maximum likelihood.
  Some results, such
as the simulation results presented above, will certainly be different
in this new setting but others may correspond well.
  Maximum
likelihood and parsimony are considerably more difficult to analyze,
but hopefully the results presented here can act as a guide.

\section*{Acknowledgments}
\label{sec:acknowledgements}
We thank Tracy Heath for directing us to the taxon sampling debate, Mike Steel for simplifying the proof of Theorem~\ref{thm:funny}, and Bernd Sturmfels, Lior Pachter and Rudy Yoshida for fruitful discussions.
We are grateful to the two anonymous reviewers for their careful reading of the manuscript.


\bibliography{paperBME}

\begin{thebibliography}{47}
\providecommand{\natexlab}[1]{#1}
\providecommand{\url}[1]{\texttt{#1}}
\expandafter\ifx\csname urlstyle\endcsname\relax
  \providecommand{\doi}[1]{doi: #1}\else
  \providecommand{\doi}{doi: \begingroup \urlstyle{rm}\Url}\fi

\bibitem[Baurain et~al.(2007)Baurain, Brinkmann, and
  Philippe]{baurainEaLackOfResolution07}
D.~Baurain, H.~Brinkmann, and H.~Philippe.
\newblock {Lack of resolution in the animal phylogeny: closely spaced
  cladogeneses or undetected systematic errors?}
\newblock \emph{Mol. Biol. Evol.}, 24\penalty0 (1):\penalty0 6, 2007.

\bibitem[Berger and Stamatakis(2009)]{bergerStamatakisEPA09}
S.A. Berger and A.~Stamatakis.
\newblock Evolutionary placement of short sequence reads, 2009.
\newblock \url{http://arxiv.org/abs/0911.2852}.

\bibitem[Bordewich et~al.(2009)Bordewich, Gascuel, Huber, and
  Moulton]{bordewichEaFastBMEConsistent09}
M.~Bordewich, O.~Gascuel, K.T. Huber, and V.~Moulton.
\newblock {Consistency of topological moves based on the balanced minimum
  evolution principle of phylogenetic inference}.
\newblock \emph{IEEE/ACM Trans. Comp. Biol. Bioinfo.}, pages 110--117, 2009.

\bibitem[Brodal et~al.(2004)Brodal, Fagerberg, and
  Pedersen]{brodalEaQuartetDistNlogN03}
G.S. Brodal, R.~Fagerberg, and C.N.S. Pedersen.
\newblock {Computing the quartet distance between evolutionary trees in time O
  (n log n)}.
\newblock \emph{Algorithmica}, 38\penalty0 (2):\penalty0 377--395, 2004.

\bibitem[Bryant et~al.(2000)Bryant, Tsang, Kearney, and
  Li]{bryantEaQuartetDist00}
D.~Bryant, J.~Tsang, P.~Kearney, and M.~Li.
\newblock {Computing the quartet distance between evolutionary trees}.
\newblock In \emph{Proceedings of the eleventh annual ACM-SIAM symposium on
  Discrete algorithms}, page 286. Society for Industrial and Applied
  Mathematics, 2000.

\bibitem[Chailloux et~al.()Chailloux, Manoury, and Pagano]{ocaml}
E.~Chailloux, P.~Manoury, and B.~Pagano.
\newblock Developing applications with {O}bjective {C}aml.
\newblock \url{http://caml.inria.fr/ocaml/index.en.html}.

\bibitem[DeBry(2005)]{debrySamplingParsimony05}
R.W. DeBry.
\newblock {The systematic component of phylogenetic error as a function of
  taxonomic sampling under parsimony}.
\newblock \emph{Sys. Biol.}, 54\penalty0 (3):\penalty0 432, 2005.

\bibitem[Desper and Gascuel(2002{\natexlab{a}})]{FastMEP}
R.~Desper and O.~Gascuel.
\newblock Fast and accurate phylogeny reconstruction algorithms based on the
  minimum-evolution principle.
\newblock In \emph{Workshop on Algorithms in Bioinformatics (WABI)}, pages
  357--374, 2002{\natexlab{a}}.

\bibitem[Desper and
  Gascuel(2002{\natexlab{b}})]{desperGascuelBMEFastAccurate02}
R.~Desper and O.~Gascuel.
\newblock {Fast and accurate phylogeny reconstruction algorithms based on the
  minimum-evolution principle}.
\newblock \emph{J. Comp. Biol.}, 9\penalty0 (5):\penalty0 687--705,
  2002{\natexlab{b}}.

\bibitem[Desper and Gascuel(2004)]{desperGascuelBME04}
R.~Desper and O.~Gascuel.
\newblock Theoretical foundation of the balanced minimum evolution method of
  phylogenetic inference and its relationship to weighted least-squares tree
  fitting.
\newblock \emph{Mol. Biol. Evol.}, 21\penalty0 (3):\penalty0 587--598, 2004.

\bibitem[Desper and Gascuel(2005)]{desperGascuelME05}
R.~Desper and O.~Gascuel.
\newblock {The minimum evolution distance-based approach to phylogenetic
  inference}.
\newblock In O.~Gascuel, editor, \emph{Mathematics of evolution \& phylogeny},
  pages 1--32. Oxford University Press, Oxford, UK, 2005.

\bibitem[Eickmeyer et~al.(2008)Eickmeyer, Huggins, Pachter, and
  Yoshida]{BMEPolytope}
K.~Eickmeyer, P.~Huggins, L.~Pachter, and R.~Yoshida.
\newblock On the optimality of the neighbor-joining algorithm.
\newblock \emph{Algor. Mol. Biol.}, 3\penalty0 (5), 2008.

\bibitem[Ewald(1996)]{Ewald}
G.~Ewald.
\newblock \emph{Combinatorial convexity and algebraic geometry}, volume 168 of
  \emph{Graduate Texts in Mathematics}.
\newblock Springer-Verlag, New York, 1996.

\bibitem[Felsenstein(1978)]{felsenstein78}
J.~Felsenstein.
\newblock Cases in which parsimony or compatibility methods will be positively
  misleading.
\newblock \emph{Sys. Zool.}, 27\penalty0 (4):\penalty0 401--410, 1978.

\bibitem[Felsenstein(1995)]{felsensteinPhylip95}
J.~Felsenstein.
\newblock {PHYLIP (phylogeny inference package), version 3.57 c}.
\newblock \emph{Department of Genetics, University of Washington, Seattle},
  1995.

\bibitem[Felsenstein(2004)]{felsensteinBook}
J.~Felsenstein.
\newblock \emph{Inferring Phylogenies}.
\newblock Sinauer Press, Sunderland, MA, 2004.

\bibitem[Galassi et~al.(2009)Galassi, Davies, Theiler, Gough, Jungman, Alken,
  Booth, and Rossi]{gsl}
M.~Galassi, J.~Davies, J.~Theiler, B.~Gough, G.~Jungman, P.~Alken, M.~Booth,
  and F.~Rossi.
\newblock \emph{{GNU} Scientific Library Reference Manual - Third Edition}.
\newblock Network Theory Ltd., 2009.
\newblock \url{http://www.gnu.org/software/gsl/}.

\bibitem[Gawrilow and Joswig(2000)]{polymake}
E.~Gawrilow and M.~Joswig.
\newblock polymake: a framework for analyzing convex polytopes.
\newblock In Gil Kalai and G\"unter~M. Ziegler, editors, \emph{Polytopes ---
  Combinatorics and Computation}, pages 43--74. Birkh\"auser, 2000.

\bibitem[Graybeal(1998)]{graybealBetter98}
A.~Graybeal.
\newblock {Is it better to add taxa or characters to a difficult phylogenetic
  problem?}
\newblock \emph{Sys. Biol.}, 47\penalty0 (1):\penalty0 9, 1998.

\bibitem[Guillemot and Pardi(2009)]{bmeHard}
S.~Guillemot and F.~Pardi.
\newblock personal communication, 2009.

\bibitem[Heath et~al.(2008{\natexlab{a}})Heath, Hedtke, and
  Hillis]{heathEaTaxSampAccuracy08}
T.A. Heath, S.M. Hedtke, and D.M. Hillis.
\newblock {Taxon sampling and the accuracy of phylogenetic analyses}.
\newblock \emph{J. Sys. Evol.}, 46\penalty0 (3):\penalty0 239--257,
  2008{\natexlab{a}}.

\bibitem[Heath et~al.(2008{\natexlab{b}})Heath, Zwickl, Kim, and
  Hillis]{heathEaTaxSampMacroevol08}
T.A. Heath, D.J. Zwickl, J.~Kim, and D.M. Hillis.
\newblock {Taxon sampling affects inferences of macroevolutionary processes
  from phylogenetic trees}.
\newblock \emph{Sys. Biol.}, 57\penalty0 (1):\penalty0 160, 2008{\natexlab{b}}.

\bibitem[Hedtke et~al.(2006)Hedtke, Townsend, and
  Hillis]{hedtkeEaResolutionByTaxSamp06}
S.M. Hedtke, T.M. Townsend, and D.M. Hillis.
\newblock {Resolution of phylogenetic conflict in large data sets by increased
  taxon sampling}.
\newblock \emph{Sys. Biol.}, 55\penalty0 (3):\penalty0 522, 2006.

\bibitem[Hendy and Penny(1989)]{hendyPennyFramework89}
M.D. Hendy and D.~Penny.
\newblock A framework for the quantitative study of evolutionary trees.
\newblock \emph{Sys. Zool.}, 38\penalty0 (4):\penalty0 297--309, 1989.

\bibitem[Hillis(1996)]{hillisInferringComplex96}
D.M. Hillis.
\newblock {Inferring complex phylogenies.}
\newblock \emph{Nature}, 383\penalty0 (6596):\penalty0 130, 1996.

\bibitem[Hillis et~al.(2003)Hillis, Pollock, McGuire, and
  Zwickl]{hillisEaIsSparseTaxSampProblem03}
D.M. Hillis, D.D. Pollock, J.A. McGuire, and D.J. Zwickl.
\newblock {Is sparse taxon sampling a problem for phylogenetic inference?}
\newblock \emph{Sys. Biol.}, 52\penalty0 (1):\penalty0 124--126, 2003.

\bibitem[Jensen(2009)]{gfan}
A.N. Jensen.
\newblock Gfan, a software system for {G}r{\"o}bner fans.
\newblock Available at
  \url{http://www.math.tu-berlin.de/~jensen/software/gfan/gfan.html}, 2009.

\bibitem[Kim(1996)]{kimInconsistencyParsimony96}
J.~Kim.
\newblock {General inconsistency conditions for maximum parsimony: Effects of
  branch lengths and increasing numbers of taxa}.
\newblock \emph{Sys. Biol.}, 45\penalty0 (3):\penalty0 363, 1996.

\bibitem[Mailund and Pedersen(2004)]{qdist}
T.~Mailund and C.N.S. Pedersen.
\newblock {QDist--quartet distance between evolutionary trees}.
\newblock \emph{Bioinformatics}, page 971, 2004.

\bibitem[Olsen et~al.(1994)Olsen, Matsuda, Hagstrom, and
  Overbeek]{olsenFastDNAml94}
G.J. Olsen, H.~Matsuda, R.~Hagstrom, and R.~Overbeek.
\newblock {fastDNAml: a tool for construction of phylogenetic trees of DNA
  sequences using maximum likelihood}.
\newblock \emph{Bioinformatics}, 10\penalty0 (1):\penalty0 41, 1994.

\bibitem[Pachter and Sturmfels(2005)]{ASCB}
L.~Pachter and B.~Sturmfels, editors.
\newblock \emph{Algebraic statistics for computational biology}, chapter~II,
  page~69.
\newblock Cambridge University Press, 2005.

\bibitem[Padberg and Gr{\"o}tschel(1985)]{MR811476}
M.~W. Padberg and M.~Gr{\"o}tschel.
\newblock Polyhedral computations.
\newblock In \emph{The traveling salesman problem}, Wiley-Intersci. Ser.
  Discrete Math., pages 307--360. Wiley, Chichester, 1985.

\bibitem[Poe(1998)]{poeSensitivity98}
S.~Poe.
\newblock {Sensitivity of phylogeny estimation to taxonomic sampling}.
\newblock \emph{Sys. Biol.}, 47\penalty0 (1):\penalty0 18, 1998.

\bibitem[Poe(2003)]{poeLongBranchSubdivision03}
S.~Poe.
\newblock {Evaluation of the strategy of long-branch subdivision to improve the
  accuracy of phylogenetic methods}.
\newblock \emph{Sys. Biol.}, 52\penalty0 (3):\penalty0 423--428, 2003.

\bibitem[Pollock et~al.(2002)Pollock, Zwickl, McGuire, and
  Hillis]{pollockEaTaxSampGood02}
D.D. Pollock, D.J. Zwickl, J.A. McGuire, and D.M. Hillis.
\newblock {Increased taxon sampling is advantageous for phylogenetic
  inference}.
\newblock \emph{Sys. Biol.}, 51\penalty0 (4):\penalty0 664--671, 2002.

\bibitem[Rannala et~al.(1998)Rannala, Huelsenbeck, Yang, and
  Nielsen]{rannalaEaTaxSampAccuracy98}
B.~Rannala, J.P. Huelsenbeck, Z.~Yang, and R.~Nielsen.
\newblock {Taxon sampling and the accuracy of large phylogenies}.
\newblock \emph{Sys. Biol.}, 47\penalty0 (4):\penalty0 702--710, 1998.

\bibitem[Robinson and Foulds(1981)]{robinsonFouldsComparison81}
D.F. Robinson and L.R. Foulds.
\newblock Comparison of phylogenetic trees.
\newblock \emph{Math. Biosci}, 53\penalty0 (1-2):\penalty0 131--147, 1981.

\bibitem[Ronquist et~al.(2005)Ronquist, Huelsenbeck, and van~der
  Mark]{mrBayesManual}
F.~Ronquist, J.P. Huelsenbeck, and P.~van~der Mark.
\newblock {MrBayes 3.1 manual}, 2005.
\newblock \url{http://mrbayes.csit.fsu.edu/mb3.1\_manual.pdf}.

\bibitem[Rosenberg and Kumar(2001)]{rosenbergKumarIncompleteTaxSampOK01}
M.S. Rosenberg and S.~Kumar.
\newblock {Incomplete taxon sampling is not a problem for phylogenetic
  inference}.
\newblock \emph{Proc. Nat. Acad. Sci.}, 98\penalty0 (19):\penalty0 10751, 2001.

\bibitem[Rosenberg and Kumar(2003)]{rosenbergKumarGenomics03}
M.S. Rosenberg and S.~Kumar.
\newblock {Taxon sampling, bioinformatics, and phylogenomics}.
\newblock \emph{Sys. Biol.}, 52\penalty0 (1):\penalty0 119--124, 2003.

\bibitem[Saitou and Nei(1987)]{saitouNeiNJ87}
N.~Saitou and M.~Nei.
\newblock The neighbor-joining method: a new method for reconstructing
  phylogenetic trees.
\newblock \emph{Mol. Biol. Evol.}, 4\penalty0 (4):\penalty0 406--425, 1987.

\bibitem[Steel and Penny(1993)]{steelPennyTreeMetrics93}
M.A. Steel and D.~Penny.
\newblock {Distributions of tree comparison metrics--some new results}.
\newblock \emph{Sys. Biol.}, 42\penalty0 (2):\penalty0 126, 1993.

\bibitem[Studier and Keppler(1988)]{NJGreedyBME}
J.A. Studier and K.J. Keppler.
\newblock A note on the neighbor-joining method of {S}aitou and {N}ei.
\newblock \emph{Mol. Biol. Evol.}, 5\penalty0 (6):\penalty0 729--731, 1988.

\bibitem[Sullivan and Swofford(1997)]{sullivanSwoffordGuineaPigsRodents97}
J.~Sullivan and D.L. Swofford.
\newblock {Are guinea pigs rodents? The importance of adequate models in
  molecular phylogenetics}.
\newblock \emph{J. Mammal. Evol.}, 4\penalty0 (2):\penalty0 77--86, 1997.

\bibitem[Von~Mering et~al.(2007)Von~Mering, Hugenholtz, Raes, Tringe, Doerks,
  Jensen, Ward, and Bork]{vonMeringEaQuantitative08}
C.~Von~Mering, P.~Hugenholtz, J.~Raes, SG~Tringe, T.~Doerks, LJ~Jensen,
  N.~Ward, and P.~Bork.
\newblock {Quantitative phylogenetic assessment of microbial communities in
  diverse environments}.
\newblock \emph{Science}, 315\penalty0 (5815):\penalty0 1126, 2007.

\bibitem[Ziegler(2006)]{zieglerLectConvexPoly95}
G.M. Ziegler.
\newblock \emph{{Lectures on polytopes}}, volume 152 of \emph{{G}raduate
  {T}exts in {M}athematics}.
\newblock Springer, 2006.

\bibitem[Zwickl and Hillis(2002)]{zwicklHillisIncreasedTaxSampGood02}
D.J. Zwickl and D.M. Hillis.
\newblock {Increased taxon sampling greatly reduces phylogenetic error}.
\newblock \emph{Sys. Biol.}, 51\penalty0 (4):\penalty0 588, 2002.

\end{thebibliography}
\bibliographystyle{plainnat}
\vspace{1cm}

\end{document}